\newtheorem{definition}{Definition}
\newtheorem{lemma}{Lemma}
\newtheorem{theorem}{Theorem}
\newtheorem{corollary}{Corollary}
\DeclareMathOperator{\Ln}{Ln}
\title{A two-parameter entropy and its fundamental properties}
\author{Supriyo Dutta\\
	\small{Centre for Theoretical Studies, Indian Institute of Technology Kharagpur,}\\
	\small{Kharagpur, West Bengal, India - 721302.} \\
	\small{Email: \texttt{dosupriyo@gmail.com}} \vspace{.25cm} \\
	Shigeru Furuichi\\
	\small{Department of Information Science, College of Humanities and Sciences, Nihon University,}\\
	\small{3-25-40, Sakurajyousui, Setagaya-Ku, Tokyo, 156-8550, Japan.}\\
	\small{Email: \texttt{furucihi@chs.nihon-u.ac.jp}}  \vspace{.25cm} \\
	Partha Guha\\
	\small{Department of Mathematics, Khalifa University,}\\
	\small{Zone 1 - Abu Dhabi, United Arab Emirates.}\\
	\small{Email: \texttt{partha.guha@ku.ac.ae}}
} 
\date{} 
\begin{document}

	\maketitle
	
	\begin{abstract}
		\noindent This article proposes a new two-parameter generalized entropy, which can be reduced to the Tsallis and the Shannon entropy for specific values of its parameters. We develop a number of information-theoretic properties of this generalized entropy and divergence, for instance, the sub-additive property, strong sub-additive property, joint convexity, and information monotonicity. This article presents an exposit investigation on the information-theoretic and information-geometric characteristics of the new generalized entropy and compare them with the properties of the Tsallis and the Shannon entropy.\\
		\noindent\textbf{keywords:} Deformed logarithm; Tsallis entropy; relative entropy; chain rule; sub-additive property; information geometry.\\
		\noindent\textbf{Mathematics Subject Classification 2010:} 94A15, 94A17
	\end{abstract}

	\section{Introduction}
	
		We encounter complex systems obeying asymptotic power-law distributions in different fields of science and technology. For explaining the statistical natures of these complex systems, an effective approach is addressing statistical mechanics in the form of a suitable generalization of the Shannon entropy. The Tsallis’ non-extensive thermostatistics \cite{tsallis1988possible} is one of such generalizations, which is utilized in image processing \cite{de2004image}, medical engineering \cite{zhang2009application}, signal analysis \cite{chen2014tsallis}, quantum information \cite{becker2019convergence, abe2002towards}, and in many other disciplines, in the recent years. The Sharma-Mittal entropy \cite{sharma1975entropy, mittal1975some} is a two-parameter generalization of the Shannon entropy which incorporates a large number of prominent entropy measures as special cases, such as the Tsallis and R{\'e}nyi entropy. It is useful in the investigations of diffusion processes in statistical physics \cite{frank2000exact}, analysis of record values in statistics \cite{paul2016sharma}, estimating the performance of clustering models in data analysis\cite{koltcov2019estimating}, and modeling uncertainty in the theory of human cognition \cite{crupi2018generalized}. In the context of astrophysics, generalized entropy is useful in modeling holographic dark energy \cite{jahromi2018generalized, younas2019cosmological}, and in the investigation of the different phenomenon of black holes \cite{sadeghi2019investigation, ghaffari2019black}. 
		
		This article concentrates on the information theoretic properties of a generalized entropy with two parameters. In the literature, a number of two-parameter generalized entropy are proposed in the context of thermodynamics and statistical mechanics. Given a discrete probability distribution $\mathcal{P} = \{p(x): x \in X\}$, the Sharma-Mittal entropy \cite{sharma1975entropy, mittal1975some} of a random variable $X$ is defined by 
		\begin{equation}
		SM_{\{\alpha, \beta\}}(X) = \frac{1}{\beta - 1} \left[1 - \left( \sum_{x \in X} \left(p(x)\right)^\alpha\right)^{\frac{1 - \beta}{1 - \alpha}}\right],
		\end{equation}
		for two real parameters $\alpha \neq 1$ and $\beta \neq 1$. Another two-parameter entropy was defined by Borges and Roditi \cite{borges1998family} which is  
		\begin{equation}
		BR_{\{\alpha, \beta\}}(X) = \sum_{x \in X} \frac{(p(x))^\alpha - (p(x))^\beta}{\beta - \alpha},
		\end{equation}
		where $\alpha \neq \beta$. Later in \cite{kaniadakis2004deformed, kaniadakis2005two} a two-parameter entropy was proposed by Kaniadakis, Lissia, and Scarfone, which is
		\begin{equation}\label{Original_definition_of_SM_entropy}
		KLS_{\{k,r\}}(X) = \sum_{x \in X} \left(p(x) \right)^{1 + r} \frac{(p(x))^k - (p(x))^{-k}}{2k} = -\sum_{x \in X} p(x) \Ln_{\{k,r\}}\left(p(x)\right), 
		\end{equation}
		where $\Ln_{\{k,r\}}(u) = u^r \frac{u^k - u^{-k}}{2k}$ and the parameters $k$ and $r$ were chosen from $\mathcal{R} = \{(k, r): -|k| \leq r \leq |k|, 0 < |k| < \frac{1}{2}\} \cup \{(k, r): |k| - 1 \leq r \leq 1 - |k|, \frac{1}{2} \leq |k| < 1\}$. The information theoretic properties of $KLS_{\{k, r\}}$ and $BR_{\{\alpha, \beta\}}$ are investigated in \cite{naudts2002deformed}, and  \cite{wada2007two, furuichi2010axiomatic}, respectively.
		
		We observe that a modification to the parameters $k$ and $r$ of $\Ln_{\{k,r\}}$ provides a product rule of the two parameter deformed logarithm. It leads us to define the two-parameter generalized entropy $S_{\{k,r\}}$ and the generalized divergence $D_{\{k,r\}}$. The significant attributes of $S_{\{k,r\}}$ and $D_{\{k,r\}}$ derived in this article are listed below:
		\begin{enumerate}
			\item 
			The pseudo-additivity of $S_{\{k,r\}}$ (Equation (\ref{pseudo_additive_property_of_SM})): Given any two discrete random variables $X$ and $Y$ we have 
			\begin{equation}
			S_{\{k,r\}}(X, Y) = S_{\{k,r\}}(X) + S_{\{k,r\}}(Y) - 2k S_{\{k,r\}}(X) S_{\{k,r\}}(Y).
			\end{equation}
			\item 
			The sub-additive property of $S_{\{k,r\}}$ (Theorem \ref{sub_additive_property}) : Given a sequence of random variables $X_1, X_2, \dots X_n$, it can be proved that
			\begin{equation}
			S_{\{k,r\}}(X_1, X_2, \dots X_n) \leq \sum_{i = 1}^n S_{\{k,r\}}(X_i).
			\end{equation}
			\item 
			The pseudo-additivity of $D_{\{k,r\}}$ (Theorem \ref{pseudo_additivity_of_divergence}): Consider probability distributions $\mathcal{P}^{(1)}$, and $\mathcal{Q}^{(1)}$ defined on a random variable $X$ as well as $\mathcal{P}^{(2)}$, and $\mathcal{Q}^{(2)}$ defined on random variable $Y$. Then,
			\begin{equation}
				D_{\{k,r\}}(\mathcal{P}^{(1)} \otimes \mathcal{P}^{(2)} || \mathcal{Q}^{(1)} \otimes \mathcal{Q}^{(2)}) =  D_{\{k,r\}}(\mathcal{P}^{(1)} || \mathcal{Q}^{(1)}) + D_{\{k,r\}}(\mathcal{P}^{(2)} || \mathcal{Q}^{(2)}) - 2k D_{\{k,r\}}(\mathcal{P}^{(1)} || \mathcal{Q}^{(1)}) D_{\{k,r\}}(\mathcal{P}^{(2)} || \mathcal{Q}^{(2)}).			
			\end{equation}
			\item 
			The joint convexity of $D_{\{k, r\}}$ (Theorem \ref{joint_convexity_of_divergence}):
			\begin{equation}
			D_{\{k, r\}}(\mathcal{P}^{(1)} + \lambda \mathcal{P}^{(2)} || \mathcal{Q}^{(1)} + \lambda \mathcal{Q}^{(2)}) \leq D_{\{k, r\}}(\mathcal{P}^{(1)} || \mathcal{Q}^{(1)}) + \lambda D_{\{k, r\}}(\mathcal{P}^{(2)} || \mathcal{Q}^{(2)}).
			\end{equation}
			\item 
			The information monotonicity of $D_{\{k, r\}}$ (Theorem \ref{information_monotonicity}) : Given any two probability distributions $\mathcal{P}$ and $\mathcal{Q}$ of a random variable and a probability transition matrix $W$ we have
			\begin{equation}
			D_{\{k,r\}}(W\mathcal{P}|| W\mathcal{Q}) \leq D_{\{k,r\}}(\mathcal{P}|| \mathcal{Q}).
			\end{equation}
		\end{enumerate}
		The similar properties for the Tsallis entropy and divergence are investigated in detail \cite{furuichi2004fundamental},  \cite{furuichi2006information}, \cite{furuichi2005uniqueness}. To the best of our knowledge, this article develops these properties for two-parameter generalized entropy first time in literature.
		
		This article is distributed as follows. In section 2, we define the joint entropy and the conditional entropy to present a number of properties of two-parameter generalized entropy as well as the chain rule. Section 3 is dedicated to two-parameter generalized relative entropy and its properties. We discuss the information geometric aspects of entropy in section 4. Then we conclude the article comparing similar properties of Shannon, Tsallis and two-parameter generalized entropy.

	\section{Two-parameter generalized entropy}
	
		From classical information theory we recall that the function $f(u) = -\log(u)$ is a positive, monotone decreasing, convex function where $0 \leq u \leq 1$ where the convention $0 \log 0 = 0$ is used. The two-parameter deformed logarithm should preserve equivalent properties. Below, we define a two parameter deformed logarithm justify its characteristics.  
		\begin{definition}\label{redefined_lnkr}
			$$\ln_{\{k,r\}}(u) = \frac{u^{k} - u^{-k}}{2k u^{r}} = \frac{u^{2k} - 1}{2k u^{r + k}},$$
			with $r > 0$ and $0 < k \leq 1$. 
		\end{definition}
		
		\begin{lemma}\label{convexity_of_lnkr}
			For $r < 0$, and $0 < k \leq 1$ the function $-\ln_{\{k,r\}}(u) = -u^r \frac{u^{k} - u^{-k}}{2k}$ is positive, convex, and monotonically decreasing for all $u \in (0, 1]$.
		\end{lemma}
		
		\begin{proof}
			Recall that a twice differentiable function $f(u), u \in \mathbb{R}$ is convex if $f''(u) > 0$. Note that, $f(u) = u^r$ is a positive, monotone decreasing, and convex function for all $u \in (0, 1]$ and $r < 0$. Also, for all $k > 0$ and $u \in (0, 1]$ we have $u^{-k} \geq u^{k}$. Therefore, the function $g(u) = - \frac{u^k - u^{-k}}{2k}$ is a positive and monotone decreasing function. For convexity, we need $g''(u) = - \frac{(k - 1)u^{k - 2} - (k + 1) u^{-k - 2}}{2} \geq 0$, which holds for $0 < k \leq 1$. We know that, if two given functions $f, g: \mathbb{R} \rightarrow \mathbb{R}^{+}$ are convex, and both monotonically decreasing  on an interval, then $fg(u) = f(u)g(u)$ is convex \cite{boyd2004convex}. Combining we get $-\ln_{\{k,r\}}(u) = f(u) g(u)$ is a positive, monotonically decreasing, and convex function.
		\end{proof}	
		
		In the next lemma, we present a product rule for $\ln_{\{k, r\}}$ which leads us to the chain rule of generalized entropy.
		
		\begin{lemma}\label{product_1}
			Given any two real numbers $u, v \neq 0$ we have
			$$(uv)^{r + k} \ln_{\{k,r\}}(uv) = u^{r + k} \ln_{\{k,r\}}(u) + v^{r + k} \ln_{\{k,r\}}(v) + 2k u^{r + k} v^{r + k} \ln_{\{k,r\}}(u)\ln_{\{k,r\}}(v).$$
		\end{lemma}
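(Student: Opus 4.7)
The plan is to reduce the identity to a one-line algebraic manipulation by exploiting the fact that, under the redefinition in Definition \ref{redefined_lnkr}, the quantity $x^{r+k}\ln_{\{k,r\}}(x)$ depends only on $x^{2k}$ and not separately on $r$. Specifically, from the definition,
\begin{equation*}
x^{r+k}\ln_{\{k,r\}}(x) \;=\; x^{r+k}\cdot\frac{x^{2k}-1}{2k\, x^{r+k}} \;=\; \frac{x^{2k}-1}{2k}.
\end{equation*}
This is the only nontrivial observation; the factor $x^{r+k}$ on the left-hand side of the claimed identity is precisely chosen to cancel the denominator in the definition, so $r$ will drop out entirely.

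First I would introduce the abbreviations $A := x^{r+k}\ln_{\{k,r\}}(x) = \frac{x^{2k}-1}{2k}$ and $B := y^{r+k}\ln_{\{k,r\}}(y) = \frac{y^{2k}-1}{2k}$. Inverting these relations gives $x^{2k} = 1 + 2kA$ and $y^{2k} = 1 + 2kB$. Next, I would compute $(xy)^{r+k}\ln_{\{k,r\}}(xy)$ by the same cancellation: it equals $\frac{(xy)^{2k}-1}{2k}$. Since $(xy)^{2k} = x^{2k}y^{2k} = (1+2kA)(1+2kB)$, expanding yields
\begin{equation*}
(xy)^{r+k}\ln_{\{k,r\}}(xy) \;=\; \frac{(1+2kA)(1+2kB)-1}{2k} \;=\; A + B + 2kAB.
\end{equation*}
Substituting the definitions of $A$ and $B$ back in recovers exactly the right-hand side of the lemma.

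There is no real obstacle: the identity is essentially the deformed-product rule for the Tsallis $q$-logarithm in disguise, with $2k$ playing the role of the deformation parameter. The only thing to watch is the non-vanishing hypothesis $x,y\neq 0$, which is needed so that the powers $x^{r+k}$, $y^{r+k}$ (with $r>0$, $0<k\leq\tfrac12$) are well-defined and the cancellation $x^{r+k}/x^{r+k}=1$ is legitimate. Since the whole derivation is a purely algebraic rearrangement, no convexity or monotonicity input from Theorem \ref{convexity_of_lnkr} is required for the proof itself.
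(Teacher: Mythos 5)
Your proof is correct and rests on the same purely algebraic computation as the paper's, namely the expansion of $(x^{2k}-1)(y^{2k}-1)$ against $x^{2k}y^{2k}-1$; the paper just organizes it in reverse, starting from $\ln_{\{k,r\}}(x)\ln_{\{k,r\}}(y)$ and using an add-and-subtract-$1$ step in the numerator. Your substitution $A=\frac{x^{2k}-1}{2k}$, $B=\frac{y^{2k}-1}{2k}$ is a slightly cleaner packaging of the identical idea, and your remark that $x^{r+k}\ln_{\{k,r\}}(x)$ is independent of $r$ correctly explains why the lemma works.
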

		\begin{proof}
			\begin{equation}
				\begin{split}
					\ln_{\{k,r\}}(u)\ln_{\{k,r\}}(v) & = \frac{u^{2k} - 1}{2k u^{r+k}}\frac{v^{2k} - 1}{2k v^{r+k}} = \frac{u^{2k}v^{2k} - u^{2k} - v^{2k} + 1}{4k^2 u^{r + k}v^{r + k}} \\
					& = \frac{u^{2k}v^{2k} -1 + 1- u^{2k} - v^{2k} + 1}{4k^2 u^{r + k}v^{r + k}} \\
					& = \frac{u^{2k}v^{2k} -1}{4k^2 u^{r + k}v^{r + k}} - \frac{u^{2k} - 1}{4k^2 u^{r + k}v^{r + k}} - \frac{v^{2k} - 1}{4k^2 u^{r + k}v^{r + k}}\\
					& =\frac{\ln_{\{k,r\}}(uv)}{2k} - \frac{\ln_{\{k,r\}}(u)}{2k v^{r + k}} - \frac{\ln_{\{k,r\}}(v)}{2k u^{r + k}}.
				\end{split}
			\end{equation}
			Simplifying, we get the result.
		\end{proof}
		
		Note that, in Lemma \ref{product_1} every term of $\ln_{\{k,r\}}(z)$ has the coefficient $z^{r + k}$ for $z = u$ and $v$. This structure motivates us to keep a term of $z^{r + k}$ with $\ln_{\{k, r\}}(z)$ in definition of entropy. Hence, we define the two-parameter generalized entropy as follows: 
		
		\begin{definition}\label{modified_definition_of_SM_entropy}
			We define the two-parameter generalized entropy for a random variable $X$ with probability distribution $\mathcal{P} = \{p(x)\}_{x \in X}$ as 
			$$S_{\{k,r\}}(X) = - \sum_{x \in X} \left(p(x)\right)^{r + k + 1} \ln_{\{k,r\}}(p(x)),$$
			where $\ln_{\{k,r\}}(u) = \frac{u^{k} - u^{-k}}{2k u^{r}}$ with $0 < k \leq \frac{1}{2}$, and $r > 0$.
		\end{definition}
		
		In Definition \ref{modified_definition_of_SM_entropy}, if $p(x) = 0$ for some $x \in X$ then conventionally we have 
		$$0^{r + k + 1}\ln_{\{k, r\}}(0) = \lim_{p(x) \rightarrow 0}\left(p(x)\right)^{r + k + 1} \ln_{\{k,r\}}(p(x)) = 0.$$ Here, restriction in the domain of $k$ is essential for proving Lemma \ref{inequality_for_subadditive_property}  and \ref{inequality_for_strongly_sub_additivity_property}. Lemma \ref{convexity_of_lnkr} suggests that for any random variable $X$ we have $S_{\{k,r\}}(X) \geq 0$. Moreover, $S_{\{k, r\}}$ reduces to the Tsallis entropy when $k = r = \frac{q - 1}{2}$ that is 
		\begin{equation}
		S_{\left\{\frac{q - 1}{2}, \frac{q - 1}{2} \right\}}(X) = - \sum_{x \in X} \left(p(x)\right)^q \frac{(p(x))^{1 - q} - 1 }{1 - q} = S_q(X).
		\end{equation}
		
		An alternative expression of $S_{\{k, r\}}$ can be presented. We can verify that
		\begin{equation}
		\ln_{\{k,r\}}(uv) = \frac{1}{u^{r - k}}\ln_{\{k,r\}}(v) + \frac{1}{v^{r + k}} \ln_{\{k,r\}}(u).
		\end{equation}
		Putting $v = \frac{1}{u}$ in this equation we find
		\begin{equation}\label{inversion_under_logarithm}
		\ln_{\{k,r\}}\left(\frac{1}{v}\right) = - u^{2r}\ln_{\{k,r\}}(u), ~\text{or}~ \ln_{\{k,r\}}(u) = - \frac{1}{u^{2r}}\ln_{\{k,r\}}\left(\frac{1}{u}\right).
		\end{equation}
		Therefore, Definition \ref{modified_definition_of_SM_entropy} suggests that 
		\begin{equation}
		S_{\{k,r\}}(X) = \sum_{x \in X} \left(p(x)\right)^{k - r + 1} \ln_{\{k,r\}} \left( \frac{1}{p(x)} \right).
		\end{equation}
		
		\begin{definition}\label{joint_entropy}
			(Joint entropy) Let $\mathcal{P} = \{p(x, y)\}_{(x,y) \in (X,Y)}$ be a probability distribution of the joint random variable $(X, Y)$. The generalized joint entropy of $(X, Y)$ is defined by 
			$$S_{\{k,r\}}(X, Y) = - \sum_{x \in X} \sum_{y \in Y} \left(p(x,y)\right)^{k + r + 1} \ln_{\{k,r\}}(p(x,y)).$$
		\end{definition}
		
		Similarly, for three random variables $X, Y$, and $Z$ the joint entropy is
		\begin{equation}
		S_{\{k,r\}}(X, Y, Z) = - \sum_{x \in X} \sum_{y \in Y} \sum_{z \in Z} \left(p(x, y, z)\right)^{k + r + 1} \ln_{\{k,r\}}(p(x, y, z)).
		\end{equation}
		
		\begin{definition}\label{conditional_entropy}
			(Conditional entropy) Given a conditional random variable $Y|X = x$ we define the generalized conditional entropy as
			\begin{equation*}
			\begin{split}
			S_{\{k,r\}}(Y|X) & = \sum_{x \in X} (p(x))^{2k + 1} S_{\{k,r\}}(Y|X = x) \\
			& = - \sum_{x \in X} (p(x))^{2k + 1} \sum_{y \in Y} \left(p(y|x)\right)^{k + r + 1} \ln_{\{k,r\}}(p(y|x)) \\
			& = - \sum_{x \in X} \sum_{y \in Y} (p(x))^{2k + 1} \left( p(y|x)\right)^{k + r + 1} \ln_{\{k,r\}}(p(y|x)).
			\end{split}
			\end{equation*}
		\end{definition}
		
		As $\ln_{\{k,r\}}(u) = - \frac{1}{u^{2r}}\ln_{\{k,r\}}\left(\frac{1}{u}\right)$, we can alternatively write down
		\begin{equation}
		S_{\{k,r\}}(Y|X) = \sum_{x \in X} \sum_{y \in Y} (p(x))^{2k + 1} \left( p(y|x)\right)^{k - r + 1} \ln_{\{k,r\}}\left(\frac{1}{p(y|x)}\right).
		\end{equation} 
		
		This definition can be generalized for three or more random variables. Given three random variables $X, Y$ and $Z$ we have
		\begin{equation}
		\begin{split}
		S_{\{k,r\}}(X, Y| Z) & = - \sum_{x \in X} \sum_{y \in Y} \left( p(z) \right)^{2k + 1} S_{\{k, r\}} (X, Y| Z = z)\\
		& = - \sum_{x \in X} \sum_{y \in Y} \sum_{z \in Z} \left( p(z) \right)^{2k + 1} \left( p(x, y | z) \right)^{ k + r + 1} \ln_{\{k, r\}} \left( p(x, y| z) \right).
		\end{split}
		\end{equation}
		In a similar fashion, we can define
		\begin{equation}
		\begin{split}
		S_{\{k,r\}}(Y| X, Z) & = - \sum_{x \in X} \sum_{y \in Y} \left( p(x, z) \right)^{2k + 1} S_{\{k, r\}} (Y| X = x, Z = z)\\
		& = - \sum_{x \in X} \sum_{y \in Y} \sum_{z \in Z} \left( p(x,z) \right)^{2k + 1} \left( p(y | x, z) \right)^{ k + r + 1} \ln_{\{k, r\}} \left( p(y| x, z) \right).
		\end{split}
		\end{equation}
		Likewise, definition of the conditional entropy can be extended for any number of random variables for defining $S_{\{k,r\}}(X_1, X_2, \dots X_n| Y_1, Y_2, \dots Y_m)$. Now we prove a number of characteristics of generalized entropy.
		
		\begin{lemma}\label{rule_for_conditional_entropy}
			Given two independent random variables $X$ and $Y$ the generalized conditional entropy can be expressed as 
			$$S_{\{k,r\}}(Y | X) = S_{\{k,r\}}(Y) - 2S_{\{k,r\}}(X)S_{\{k,r\}}(Y).$$
		\end{lemma}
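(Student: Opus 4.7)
My plan is to exploit independence to factorize the double sum in the definition of $S_{\{k,r\}}(Y|X)$, and then identify the remaining sum $\sum_{x} p(x)^{2k+1}$ as an affine function of $S_{\{k,r\}}(X)$.

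First, under independence $p(y|x)=p(y)$, so starting from Definition \ref{conditional_entropy} I would write
\begin{equation*}
S_{\{k,r\}}(Y|X) = -\sum_{x \in X}\sum_{y \in Y} p(x)^{2k+1}\, p(y)^{k+r+1}\ln_{\{k,r\}}(p(y)) = \left(\sum_{x \in X} p(x)^{2k+1}\right) S_{\{k,r\}}(Y),
\end{equation*}
since the $y$-sum is precisely $S_{\{k,r\}}(Y)$ by Definition \ref{modified_definition_of_SM_entropy}.

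Next I would evaluate $\sum_{x} p(x)^{2k+1}$ in terms of $S_{\{k,r\}}(X)$. Using the explicit form $\ln_{\{k,r\}}(p(x)) = \frac{p(x)^{2k}-1}{2k\, p(x)^{r+k}}$, the entropy simplifies telescopically as
\begin{equation*}
S_{\{k,r\}}(X) = -\sum_{x \in X} p(x)^{r+k+1}\cdot\frac{p(x)^{2k}-1}{2k\, p(x)^{r+k}} = \frac{1}{2k}\sum_{x \in X}\bigl(p(x) - p(x)^{2k+1}\bigr) = \frac{1 - \sum_{x} p(x)^{2k+1}}{2k},
\end{equation*}
using $\sum_x p(x)=1$. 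Rearranging gives $\sum_{x} p(x)^{2k+1} = 1 - 2k\, S_{\{k,r\}}(X)$.

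Substituting this back yields $S_{\{k,r\}}(Y|X) = (1 - 2k\,S_{\{k,r\}}(X))\,S_{\{k,r\}}(Y)$, which is the claim. There is no real obstacle here; the only point to be careful about is keeping track of the exponents $r+k+1$, $k-r+1$, and $2k+1$ that appear in the definitions, and verifying that the cancellation $p(x)^{r+k+1}/p(x)^{r+k} = p(x)$ inside the $x$-sum is what makes the $r$-dependence drop out so that $\sum_x p(x)^{2k+1}$ appears cleanly.
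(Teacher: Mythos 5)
Your proof is correct and is essentially the same as the paper's: both hinge on the identity $p(x)^{2k} = 1 + 2k\,p(x)^{r+k}\ln_{\{k,r\}}(p(x))$ (your telescoping evaluation of $\sum_x p(x)^{2k+1} = 1 - 2k\,S_{\{k,r\}}(X)$ is just this identity multiplied by $p(x)$ and summed), combined with $p(y|x)=p(y)$ under independence. The only difference is presentational — you factorize the double sum first and then evaluate the marginal factor, while the paper substitutes the identity before splitting the sum — so the two arguments are algebraically identical.
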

		
		\begin{proof}
			Definition of $\ln_{\{k, r\}}$ suggests that $(p(x))^{2k} = 1 + 2k (p(x))^{r + k} \ln_{\{k,r\}}(p(x))$. Putting it in definition of the conditional entropy we construct
			\begin{equation}
			S_{\{k,r\}}(Y|X) = - \sum_{x \in X} (p(x))\left[ 1 + 2k (p(x))^{r + k} \ln_{\{k,r\}}(p(x)) \right] \sum_{y \in Y} \left(p(y|x)\right)^{r + k + 1} \ln_{\{k,r\}}(p(y|x)).
			\end{equation}
			As $X$ and $Y$ are independent we have $p(y|x) = p(y)$. Therefore,
			\begin{equation}
				\begin{split}
					& S_{\{k,r\}}(Y|X) = - \sum_{x \in X} (p(x))\left[ 1 + 2k (p(x))^{r + k} \ln_{\{k,r\}}(p(x)) \right]  \times \sum_{y \in Y} \left(p(y)\right)^{r + k + 1} \ln_{\{k,r\}}(p(y)) \\
					= & - \sum_{x \in X} (p(x)) \sum_{y \in Y} \left(p(y)\right)^{r + k + 1} \ln_{\{k,r\}}(p(y)) - \sum_{x \in X} 2k (p(x))^{r + k + 1} \ln_{\{k,r\}}(p(x)) \sum_{y \in Y} \left(p(y)\right)^{r + k + 1} \ln_{\{k,r\}}(p(y))\\
					= & S_{\{k,r\}}(Y) - 2k S_{\{k,r\}}(X)S_{\{k,r\}}(Y).
				\end{split}
			\end{equation}
		\end{proof}
		
		Lemma \ref{rule_for_conditional_entropy} suggests that $S_{\{k, r\}}(Y | X) \leq S_{\{k, r\}}(Y)$ for independent random variables $X$ and $Y$. The next lemma proves this inequality for any two random variables.
		
		\begin{lemma}\label{inequality_for_subadditive_property} 
			Given any two random variables $X$ and $Y$ we have $S_{\{k, r\}}(Y | X) \leq S_{\{k, r\}}(Y)$.
		\end{lemma}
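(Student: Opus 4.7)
The plan is to reduce the statement to two elementary inequalities, using only the power $2k+1 \geq 1$ on probabilities and concavity of a specific auxiliary function.

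First I would rewrite the conditional entropy in the natural form
\[
S_{\{k,r\}}(Y|X) = \sum_{x \in X} (p(x))^{2k+1} S_{\{k,r\}}(Y|X=x),
\]
where $S_{\{k,r\}}(Y|X=x) = -\sum_{y}(p(y|x))^{r+k+1}\ln_{\{k,r\}}(p(y|x))$ is the generalised Tsallis entropy of the conditional distribution $\{p(y|x)\}_{y}$. By the positivity observation following Definition~\ref{modified_definition_of_SM_entropy}, each $S_{\{k,r\}}(Y|X=x) \geq 0$. Since $0 \leq p(x) \leq 1$ and $2k+1 \geq 1$, the elementary inequality $(p(x))^{2k+1} \leq p(x)$ then gives
\[
S_{\{k,r\}}(Y|X) \;\leq\; \sum_{x \in X} p(x)\, S_{\{k,r\}}(Y|X=x).
\]

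The remaining step is to bound the right-hand side by $S_{\{k,r\}}(Y)$. I would introduce
\[
g(z) = -z^{r+k+1}\ln_{\{k,r\}}(z) = \frac{z - z^{2k+1}}{2k},
\]
so that $S_{\{k,r\}}(Y|X=x) = \sum_y g(p(y|x))$ and $S_{\{k,r\}}(Y) = \sum_y g(p(y))$. A direct computation yields $g''(z) = -(2k+1)\,z^{2k-1}$, which is non-positive for $z>0$ whenever $0 < k \leq \tfrac{1}{2}$; thus $g$ is concave on $(0,1]$ (with $g(0)=0$ handling the boundary by the usual convention). Using $p(y) = \sum_{x} p(x)\,p(y|x)$ and applying Jensen's inequality for the concave function $g$ with weights $\{p(x)\}_{x}$ yields, for each fixed $y$,
\[
g(p(y)) \;\geq\; \sum_{x \in X} p(x)\, g(p(y|x)).
\]
Summing over $y$ and interchanging the order of summation gives $S_{\{k,r\}}(Y) \geq \sum_{x} p(x)\, S_{\{k,r\}}(Y|X=x)$, which combined with the previous display completes the proof.

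The most delicate point will be confirming that the two independent reductions move in compatible directions: the reduction $(p(x))^{2k+1}\leq p(x)$ decreases the conditional entropy, and Jensen's concavity estimate upper-bounds $\sum_x p(x) S_{\{k,r\}}(Y|X=x)$ by $S_{\{k,r\}}(Y)$. Both steps fail if $k > \tfrac{1}{2}$ (concavity of $g$ breaks) or if $2k+1 < 1$, so it is worth emphasising that the hypothesis $0 < k \leq \tfrac{1}{2}$ fixed in Definition~\ref{redefined_lnkr} is used essentially. No use of the parameter $r$ beyond the algebraic identity $-z^{r+k+1}\ln_{\{k,r\}}(z) = (z-z^{2k+1})/(2k)$ is needed, which reflects the observation that $S_{\{k,r\}}$ is actually $r$-independent on the level of its numerical value.
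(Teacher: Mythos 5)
Your proof is correct and follows essentially the same route as the paper's: the same auxiliary function $-z^{r+k+1}\ln_{\{k,r\}}(z)=(z-z^{2k+1})/(2k)$ (the paper uses its negative and calls it convex), the same Jensen step via $p(y)=\sum_{x}p(x)\,p(y|x)$, and the same elementary bound $(p(x))^{2k+1}\le p(x)$ applied to the nonnegative conditional entropies. One small correction to your closing remark: since $g''(z)=-(2k+1)z^{2k-1}<0$ for every $k>0$, concavity does not in fact break for $k>\tfrac{1}{2}$, so the hypothesis $0<k\le\tfrac{1}{2}$ is not essential to this particular lemma.
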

		
		\begin{proof}
			Note that, the function $f(u) = u^{k + r + 1} \ln_{\{k, r\}}(u)$ where $r > 0, 0 < k \leq \frac{1}{2}$ and $0 \leq u \leq 1$ is a convex function, that is $-f(u)$ is a concave function. As $0 \leq p(x) \leq 1$, we have $0 \leq \left( p(x) \right)^{2k + 1} \leq p(x) \leq 1$. Also, $0 \leq p(y|x) \leq 1$ indicates $-f(p(y|x)) = - \left(p(y|x) \right)^{k + r + 1} \ln_{\{k, r\}} \left( p(y | x) \right) = \left(p(y|x) \right)^{k - r + 1} \ln_{\{k, r\}} \left( \frac{1}{p(y | x)} \right) \geq 0$, for $0 \leq x \leq 1$. Combining we get 
			\begin{equation}
			- (p(x))^{2k + 1} \left(p(y|x) \right)^{k + r + 1} \ln_{\{k, r\}} \left( p(y | x) \right) \leq - p(x) \left(p(y|x) \right)^{k + r + 1} \ln_{\{k, r\}} \left( p(y | x) \right).
			\end{equation}
			Now, applying the concavity property of $-f(u)$ we find
			\begin{equation}\label{critical_inequality}
			- \sum_{x \in X} p(x) f \left( p(y|x) \right) \leq - f \left( \sum_{x \in X} p(x) p(y|x) \right) = - f \left( \sum_{x \in X} p(x, y) \right) = - f\left( p(y) \right).
			\end{equation} 
			Expanding $f(p(y|x))$ in the above equation,
			\begin{equation}
			- \sum_{x \in X} p(x) \left(p(y|x) \right)^{k + r + 1} \ln_{\{k, r\}} \left( p(y | x) \right) \leq - \left(p(y)\right)^{k + r + 1} \ln_{\{k, r\}} \left(p(y)\right).
			\end{equation} 
			Summing over $Y$ we find
			\begin{equation} 
			- \sum_{x \in X} p(x) \sum_{y \in Y} \left(p(y|x) \right)^{k + r + 1} \ln_{\{k, r\}} \left( p(y | x) \right) \leq - \sum_{y \in Y} \left(p(y)\right)^{k + r + 1} \ln_{\{k, r\}} \left(p(y)\right).
			\end{equation}
			Combining this equation with equation (\ref{critical_inequality}) we find
			\begin{equation}
			\begin{split}
			- \sum_{x \in X} (p(x))^{2k + 1} \sum_{y \in Y} \left(p(y|x) \right)^{k + r + 1} \ln_{\{k, r\}} \left( p(y | x) \right) 
			\leq & - \sum_{x \in X} p(x) \sum_{y \in Y} \left(p(y|x) \right)^{k + r + 1} \ln_{\{k, r\}} \left( p(y | x) \right)\\
			\leq & - \sum_{y \in Y} \left(p(y)\right)^{k + r + 1} \ln_{\{k, r\}} \left(p(y)\right).
			\end{split}
			\end{equation} 
			The first and the last term of the above inequality indicates $S_{\{k, r\}}(Y | X) \leq S_{\{k, r\}}(Y)$. 
		\end{proof}
		
		\begin{theorem}\label{chain_rule_for_Sharma_MIttal_entropy}
			(Chain rule for generalized entropy) Given any two random variables $X$ and $Y$ we have 
			$$S_{\{k,r\}}(X, Y) = S_{\{k,r\}}(X) + S_{\{k,r\}}(Y|X).$$
		\end{theorem}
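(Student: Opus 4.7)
The plan is to apply the product rule of Lemma \ref{product_1} to the factorization $p(x,y)=p(x)\,p(y\mid x)$ and then collapse the resulting three-term sum using the same algebraic identity that powered Lemma \ref{rule_for_conditional_entropy}. First, I would substitute $x\mapsto p(x)$ and $y\mapsto p(y\mid x)$ in Lemma \ref{product_1}, then multiply both sides by $p(x,y)=p(x)p(y\mid x)$. The left-hand side becomes $p(x,y)^{r+k+1}\ln_{\{k,r\}}(p(x,y))$, which is precisely the summand of $-S_{\{k,r\}}(X,Y)$, and the right-hand side acquires three terms with respective weight factors $p(y\mid x)$, $p(x)$, and $2k\,p(x)^{r+k+1}p(y\mid x)^{r+k+1}\ln_{\{k,r\}}(p(x))\ln_{\{k,r\}}(p(y\mid x))$ before the logarithms.

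Next, I would sum over $y$ with $x$ held fixed. The first term contributes $p(x)^{r+k+1}\ln_{\{k,r\}}(p(x))$ because $\sum_y p(y\mid x)=1$. The second and third terms both carry the common inner sum $\sum_y p(y\mid x)^{r+k+1}\ln_{\{k,r\}}(p(y\mid x))$, with outer coefficients $p(x)$ and $2k\,p(x)^{r+k+1}\ln_{\{k,r\}}(p(x))$ respectively. This is the key organizational step that separates the $x$-dependence from the inner conditional $y$-sum.

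The decisive algebraic step — and the only one with any content — is the collapse
\[
p(x)+2k\,p(x)^{r+k+1}\ln_{\{k,r\}}(p(x)) \;=\; p(x)\bigl(1+2k\,p(x)^{r+k}\ln_{\{k,r\}}(p(x))\bigr) \;=\; p(x)^{2k+1},
\]
where the last equality rests on the elementary identity $x^{2k}=1+2k\,x^{r+k}\ln_{\{k,r\}}(x)$, an immediate rearrangement of Definition \ref{redefined_lnkr} already used inside the proof of Lemma \ref{rule_for_conditional_entropy}. This is precisely the weight $p(x)^{2k+1}$ demanded by Definition \ref{conditional_entropy}, which is why that definition is the natural one.

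Finally, summing over $x$ and negating, the isolated piece assembles into $S_{\{k,r\}}(X)$ while the combined piece becomes $\sum_x p(x)^{2k+1}\sum_y p(y\mid x)^{r+k+1}\ln_{\{k,r\}}(p(y\mid x))\cdot(-1)=S_{\{k,r\}}(Y\mid X)$, delivering the chain rule. I do not anticipate any analytic obstacle; the entire proof is driven by Lemma \ref{product_1} together with the $p(x)^{2k+1}$ collapse, and the main risk is simply bookkeeping errors in the exponents $r+k$, $r+k+1$, and $2k+1$.
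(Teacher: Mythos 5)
Your proposal is correct and follows essentially the same route as the paper's proof: apply Lemma \ref{product_1} to $p(x,y)=p(x)p(y\mid x)$, collapse the second and third terms via the identity $x^{2k}=1+2k\,x^{r+k}\ln_{\{k,r\}}(x)$ to produce the weight $p(x)^{2k+1}$, and identify the two resulting sums with $S_{\{k,r\}}(X)$ and $S_{\{k,r\}}(Y\mid X)$. The only difference is cosmetic — you sum over $y$ before invoking the collapse identity, whereas the paper applies it pointwise first — so there is nothing further to add.
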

		
		\begin{proof}
			The product rule of $\ln_{\{k,r\}}(u)$ mentioned in Lemma \ref{product_1} indicates that  
			\begin{equation}
				\begin{split}
					(p(x)p(y|x))^{r + k} \ln_{\{k,r\}}(p(x)p(y|x)) = & p(x)^{r + k} \ln_{\{k,r\}}(p(x)) + p(y|x)^{r + k} \ln_{\{k,r\}}(p(y|x)) \\
					& \hspace{2cm} + 2k p(x)^{r + k} p(y|x)^{r + k} \ln_{\{k,r\}}(p(x))\ln_{\{k,r\}}(p(y|x)).
				\end{split} 
			\end{equation}
			Applying $p(x,y) = p(x) p(y|x)$ we find that 
			\begin{equation} 
				\begin{split} 
					(p(x,y))^{r + k} \ln_{\{k,r\}}(p(x,y)) = & p(x)^{r + k} \ln_{\{k,r\}}(p(x)) + p(y|x)^{r + k} \ln_{\{k,r\}}(p(y|x))\\
					& + 2k p(x)^{r + k} p(y|x)^{r + k} \ln_{\{k,r\}}(p(x))\ln_{\{k,r\}}(p(y|x))\\
					= & p(x)^{r + k} \ln_{\{k,r\}}(p(x)) + [1 + 2k (p(x))^{r + k}\ln_{\{k,r\}}(p(x))] p(y|x)^{r + k} \ln_{\{k,r\}}(p(y|x)). \\
				\end{split}
			\end{equation}
			Definition \ref{modified_definition_of_SM_entropy} of the generalized entropy suggests that $(p(x))^{2k} = 1 + 2k (p(x))^{r + k} \ln_{\{k,r\}}(p(x))$. Putting it in the above equation we find
			\begin{equation}
			(p(x,y))^{r + k} \ln_{\{k,r\}}(p(x,y)) = p(x)^{r + k} \ln_{\{k,r\}}(p(x)) + (p(x))^{2k} p(y|x)^{r + k} \ln_{\{k,r\}}(p(y|x)).
			\end{equation}
			Multiplying both side by $p(x,y)$ and summing over $X$ and $Y$ we get 
			\begin{equation}
				\begin{split}
					- \sum_{x \in X} \sum_{y \in Y} p(x,y))^{r + k + 1} \ln_{\{k,r\}}(p(x,y)) = & - \sum_{x \in X} \sum_{y \in Y} p(x,y) p(x)^{r + k} \ln_{\{k,r\}}(p(x)) \\
					& \hspace{2cm} - \sum_{x \in X} \sum_{y \in Y} p(x,y)(p(x))^{2k} p(y|x)^{r + k} \ln_{\{k,r\}}(p(y|x)). 
				\end{split}
			\end{equation}
			Now, definitions of the joint entropy and the conditional entropy together indicate 
			\begin{equation}
			\begin{split}
			S_{\{k,r\}}(X, Y) = & - \left[\sum_{x \in X} p(x)^{r + k + 1} \ln_{\{k,r\}}(p(x)) \right] \left[\sum_{y \in Y} p(y|x) \right] \\
			& \hspace{2cm} - \sum_{x \in X} \sum_{y \in Y} (p(x))^{2k + 1} p(y|x)^{r + k + 1} \ln_{\{k,r\}}(p(y|x)) \\
			\text{or}~ S_{\{k,r\}}(X, Y) = & S_{\{k,r\}}(X) + S_{\{k,r\}}(Y|X).\\  
			\end{split}
			\end{equation}
		\end{proof}
		
		The above theorem clearly indicates that $S_{\{k,r\}}(X) \leq S_{\{k,r\}}(X, Y)$. For two independent random variables $X$ and $Y$ Lemma \ref{rule_for_conditional_entropy} and Theorem \ref{chain_rule_for_Sharma_MIttal_entropy} produce that the pseudo-additivity property for the generalized entropy which is
		\begin{equation}\label{pseudo_additive_property_of_SM}
		S_{\{k,r\}}(X, Y) = S_{\{k,r\}}(X) + S_{\{k,r\}}(Y) - 2k S_{\{k,r\}}(X) S_{\{k,r\}}(Y).
		\end{equation}
		
		\begin{corollary}\label{chani_rule_1}
			The following chain rules holds for the generalized entropy: $S_{\{k,r\}}(X,Y,Z) = S_{\{k,r\}}(X,Y|Z) + S_{\{k,r\}}(Z)$.
		\end{corollary}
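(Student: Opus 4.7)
The plan is to obtain this as a direct application of Theorem \ref{chain_rule_for_Sharma_MIttal_entropy}, treating the pair $(X,Y)$ as a single composite random variable. Since $S_{\{k,r\}}$, the joint entropy, and the conditional entropy are all defined purely in terms of the underlying probability distribution, bundling $X$ and $Y$ together into a variable $W$ with outcomes $(x,y)$ is notationally legitimate: the joint distribution of $(W,Z)$ agrees with the joint distribution of $(X,Y,Z)$, and so $S_{\{k,r\}}(W,Z)=S_{\{k,r\}}(X,Y,Z)$, while $S_{\{k,r\}}(W\mid Z)=S_{\{k,r\}}(X,Y\mid Z)$ (this last identification is exactly the definition of $S_{\{k,r\}}(X,Y\mid Z)$ written out in the paragraph following Definition \ref{conditional_entropy}).

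First I would note that Theorem \ref{chain_rule_for_Sharma_MIttal_entropy} in fact yields, with the two arguments swapped, $S_{\{k,r\}}(Z,W)=S_{\{k,r\}}(Z)+S_{\{k,r\}}(W\mid Z)$ for any two random variables $Z$ and $W$, simply because joint entropy is symmetric in its arguments (both $S_{\{k,r\}}(X,Y)$ and $S_{\{k,r\}}(Y,X)$ equal the sum in Definition \ref{joint_entropy}). Substituting $W=(X,Y)$ then gives
\begin{equation*}
S_{\{k,r\}}(X,Y,Z)=S_{\{k,r\}}(Z)+S_{\{k,r\}}(X,Y\mid Z),
\end{equation*}
which is the claimed identity. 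If one prefers an intrinsic derivation that avoids invoking the bundling argument, the alternative is to rerun the proof of Theorem \ref{chain_rule_for_Sharma_MIttal_entropy} verbatim with the factorisation $p(x,y,z)=p(z)p(x,y\mid z)$ in place of $p(x,y)=p(x)p(y\mid x)$; Lemma \ref{product_1} applied to the product $p(z)\cdot p(x,y\mid z)$ together with the identity $(p(z))^{2k}=1+2k(p(z))^{r+k}\ln_{\{k,r\}}(p(z))$ collapses the cross term in exactly the same way.

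There is no genuine obstacle here: the content is entirely subsumed by Theorem \ref{chain_rule_for_Sharma_MIttal_entropy}. The only care needed is to verify the notational consistency mentioned above, namely that the definitions of $S_{\{k,r\}}(X,Y\mid Z)$ and $S_{\{k,r\}}((X,Y),Z)$ coincide with the expressions one obtains by regarding $(X,Y)$ as a single variable. This is immediate from the way the joint and conditional entropies are written, so the corollary follows at once.
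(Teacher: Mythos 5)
Your proposal is correct, and your primary route differs from the paper's. The paper does not invoke Theorem \ref{chain_rule_for_Sharma_MIttal_entropy} at all: it reruns the argument from scratch, applying Lemma \ref{product_1} to the factorisation $p(x,y,z)=p(x,y|z)\,p(z)$, collapsing the cross term via $(p(z))^{2k}=1+2k(p(z))^{r+k}\ln_{\{k,r\}}(p(z))$, and then matching the resulting sums against the definitions of the joint and conditional entropies --- which is precisely the ``intrinsic derivation'' you offer as your fallback. Your main argument, by contrast, reduces the corollary to the already-proved theorem by bundling $(X,Y)$ into a single composite variable $W$; this is legitimate because, as you check, $S_{\{k,r\}}(X,Y,Z)$ and $S_{\{k,r\}}(X,Y|Z)$ as defined in the paper are literally $S_{\{k,r\}}(W,Z)$ and $S_{\{k,r\}}(W|Z)$, and the joint entropy is manifestly symmetric in its arguments. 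Your route buys economy and makes clear that the corollary carries no new analytic content; the paper's route buys self-containedness, making the consistency of the multi-variable definitions with the conditional factorisation explicit rather than delegating it to an identification of composite random variables. Both are sound, and since you supply the paper's computation as an alternative anyway, nothing is missing.
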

		\begin{proof}
			We have $p(x, y, z) = p(x, y | z) p(z)$. Now, applying the product rule mentioned in Lemma \ref{product_1} we find 
			\begin{equation}
			\begin{split}
			\left( p(x, y, z) \right)^{r + k} \ln_{\{k, r\}} \left( p(x, y, z) \right) & = \left( p(z) \right)^{r + k} \left( p(x, y| z) \right)^{r + k} \ln_{\{k, r\}} \left( p(x, y | z) p(z) \right)  \\
			& = \left( p(z) \right)^{ r + k} \ln_{\{k, r\}} \left( p(z) \right) + \left( p(x,y| z) \right)^{ r + k} \ln_{\{k, r\}} \left( p(x, y | z) \right) \\
			& \hspace{2cm} + 2k \left( p(z) \right)^{ r + k} \left( p(x,y| z) \right)^{ r + k} \ln_{\{k, r\}} \left( p(z) \right) \ln_{\{k, r\}} \left( p(x, y | z) \right).
			\end{split}
			\end{equation}
			Now the equation $\left( p(z) \right)^{2k} = 1 + 2k \left( p(z) \right)^{r + k} \ln_{\{k, r\}} \left( p(z) \right)$ and definitions of joint and conditional entropies indicate $S_{\{k,r\}}(X,Y,Z) =S_{\{k,r\}}(X,Y|Z) + S_{\{k,r\}}(Z)$.
		\end{proof}
		
		\begin{corollary}\label{chani_rule_2} 
			The generalized entropy also fulfills the chain rule: 
			$$S_{\{k,r\}}(X,Y|Z) = S_{\{k,r\}}(X|Z) + S_{\{k,r\}}(Y| X,Z).$$ 
		\end{corollary}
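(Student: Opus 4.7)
The plan is to mimic the proof strategy of Theorem \ref{chain_rule_for_Sharma_MIttal_entropy} and Corollary \ref{chani_rule_1}, but applied to the conditional probability $p(x,y\mid z)=p(x\mid z)\,p(y\mid x,z)$ rather than a joint probability. The engine is again Lemma \ref{product_1}, followed by the self-consistency identity $(p)^{2k}=1+2k\,p^{r+k}\ln_{\{k,r\}}(p)$ that was already exploited twice in the preceding proofs.

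First, I would apply Lemma \ref{product_1} with $x\mapsto p(x\mid z)$ and $y\mapsto p(y\mid x,z)$ to obtain an expansion of $(p(x,y\mid z))^{r+k}\ln_{\{k,r\}}(p(x,y\mid z))$ as a sum of three terms. Next, using the identity $(p(x\mid z))^{2k}=1+2k\,(p(x\mid z))^{r+k}\ln_{\{k,r\}}(p(x\mid z))$, I collapse the cross term into the second summand, so that the expansion simplifies to
\begin{equation*}
(p(x,y\mid z))^{r+k}\ln_{\{k,r\}}(p(x,y\mid z)) = (p(x\mid z))^{r+k}\ln_{\{k,r\}}(p(x\mid z)) + (p(x\mid z))^{2k}(p(y\mid x,z))^{r+k}\ln_{\{k,r\}}(p(y\mid x,z)).
\end{equation*}

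I then multiply both sides by $p(x,y\mid z)=p(x\mid z)p(y\mid x,z)$ to raise the exponents by one, multiply further by $(p(z))^{2k+1}$, negate, and sum over $x,y,z$. The left-hand side is exactly $S_{\{k,r\}}(X,Y\mid Z)$ by definition. In the first resulting term on the right, the factor $p(y\mid x,z)$ sums to unity over $y$, leaving $-\sum_{x,z}(p(z))^{2k+1}(p(x\mid z))^{r+k+1}\ln_{\{k,r\}}(p(x\mid z))=S_{\{k,r\}}(X\mid Z)$.

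The key bookkeeping step, and the only place requiring care, is the second term: I need to recognize $(p(z))^{2k+1}(p(x\mid z))^{2k+1}=(p(x\mid z)p(z))^{2k+1}=(p(x,z))^{2k+1}$, so that the remaining sum matches the definition of $S_{\{k,r\}}(Y\mid X,Z)$ given in the excerpt. Assembling the two pieces yields the claimed chain rule. I do not anticipate a genuine obstacle beyond matching exponents and probability factors to the correct definitions; the algebraic core has already been carried out in Theorem \ref{chain_rule_for_Sharma_MIttal_entropy}.
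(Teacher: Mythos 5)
Your proof is correct, but it takes a genuinely different route from the paper. You work directly at the level of conditional distributions: you apply Lemma \ref{product_1} to the factorization $p(x,y\mid z)=p(x\mid z)\,p(y\mid x,z)$, absorb the cross term via $(p(x\mid z))^{2k}=1+2k(p(x\mid z))^{r+k}\ln_{\{k,r\}}(p(x\mid z))$, and then match the three resulting sums to the definitions of $S_{\{k,r\}}(X,Y\mid Z)$, $S_{\{k,r\}}(X\mid Z)$, and $S_{\{k,r\}}(Y\mid X,Z)$ --- the step $(p(z))^{2k+1}(p(x\mid z))^{2k+1}=(p(x,z))^{2k+1}$ that you flag is indeed exactly what makes the last sum equal $S_{\{k,r\}}(Y\mid X,Z)$, and the computation checks out. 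The paper instead argues by elimination among previously established identities: it first derives $S_{\{k,r\}}(X,Y,Z)=S_{\{k,r\}}(Y\mid X,Z)+S_{\{k,r\}}(X,Z)$ from the factorization $p(x,y,z)=p(y\mid x,z)\,p(x,z)$ (citing the ``similar approach'' of Theorem \ref{chain_rule_for_Sharma_MIttal_entropy} and Corollary \ref{chani_rule_1}), then substitutes Corollary \ref{chani_rule_1} for $S_{\{k,r\}}(X,Y,Z)$ and Theorem \ref{chain_rule_for_Sharma_MIttal_entropy} for $S_{\{k,r\}}(X,Z)$, and cancels the $S_{\{k,r\}}(Z)$ terms. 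Your version is more self-contained and arguably cleaner, since the paper's intermediate three-variable identity itself requires a direct expansion of essentially the kind you perform; the paper's version buys brevity by reusing stated results, at the cost of leaving that intermediate step only sketched. The only caveat common to both arguments is that Lemma \ref{product_1} is stated for nonzero arguments, so vanishing conditional probabilities must be handled by the paper's standing convention $0\cdot\ln_{\{k,r\}}(0)=0$.
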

		\begin{proof}
			We also have $p(x, y, z) = p(y| x, z) p(x, z)$. Applying the similar approach in Corollary \ref{chani_rule_1} and Theorem \ref{chain_rule_for_Sharma_MIttal_entropy} we have 
			\begin{equation}
			\begin{split}
			& S_{\{k, r\}} (X, Y, Z) = S_{\{k , r\}} (Y | X, Z) + S_{\{k, r\}}(X, Z) \\ 
			\text{or}~ & S_{\{k, r\}} (Y | X, Z ) = S_{\{k, r\}} (X, Y, Z) - S_{\{k, r\}} (X, Z). 
			\end{split}
			\end{equation}
			Applying Corollary \ref{chani_rule_1} we have 
			\begin{equation}
			\begin{split}
			S_{\{k, r\}} (Y | X, Z ) = S_{\{k,r\}}(X,Y|Z) + S_{\{k,r\}}(Z) - S_{\{k, r\}} (X, Z).
			\end{split}
			\end{equation}
			Now Theorem \ref{chain_rule_for_Sharma_MIttal_entropy} suggests $S_{\{k, r\}} (X, Z) = S_{\{k, r\}} (Z) + S_{\{k, r\}} (X|Z)$. Putting it in the above equation we have
			\begin{equation}
			\begin{split}
			& S_{\{k, r\}} (Y | X, Z ) = S_{\{k,r\}}(X,Y|Z) + S_{\{k,r\}}(Z) - [S_{\{k, r\}} (Z) + S_{\{k, r\}} (X|Z)] \\
			\text{or}~ & S_{\{k, r\}} (Y | X, Z ) = S_{\{k,r\}}(X,Y|Z) - S_{\{k, r\}} (X|Z) \\
			\text{or}~ & S_{\{k,r\}}(X,Y|Z) = S_{\{k,r\}}(X|Z) + S_{\{k,r\}}(Y| X,Z). 
			\end{split}
			\end{equation}
		\end{proof}
		
		Corollary \ref{chani_rule_2} also suggests that $S_{\{k,r\}}(X|Z) \leq S_{\{k,r\}}(X,Y|Z)$. In general Corollary \ref{chani_rule_1} and \ref{chani_rule_2} can be generalized as
		\begin{equation}
		S_{\{k, r\}}(X_1, X_2, \dots X_n|Y) = \sum_{i = 1}^n S_{\{k, r\}}(X_i| X_{i - 1}, \dots , X_1, Y), 
		\end{equation}
		which indicates
		\begin{equation}
		S_{\{k, r\}}(X_1, X_2, \dots X_n) = \sum_{i = 1}^n S_{\{k, r\}} (X_i | X_{i - 1}, \dots, X_1). 
		\end{equation}
		
		For any two independent random variables $X$ and $Y$ equation (\ref{pseudo_additive_property_of_SM}) suggests that $S_{\{k,r\}}(X, Y) \leq S_{\{k,r\}}(X) + S_{\{k,r\}}(Y)$. If $X$ and $Y$ are any two random variables Theorem \ref{chain_rule_for_Sharma_MIttal_entropy} and Lemma \ref{inequality_for_subadditive_property} together indicate the following theorem, which is the sub-additive property for the generalized entropy.
		\begin{theorem}\label{sub_additive_property}
			Given any two random variables $X$ and $Y$ we have $S_{\{k,r\}}(X, Y) \leq S_{\{k,r\}}(X) + S_{\{k,r\}}(Y)$.
		\end{theorem}
		For random variables $X_1, X_2, \dots X_n$ this theorem can be further generalized as
		\begin{equation}
		S_{\{k,r\}}(X_1, X_2, \dots X_n) \leq \sum_{i = 1}^n S_{\{k,r\}}(X_i).
		\end{equation} 
		
		\begin{lemma}\label{inequality_for_strongly_sub_additivity_property}
			Given any three random variables $X$, $Y$ and $Z$ we have $S_{\{k,r\}}(Y|Z) \geq S_{\{k,r\}}(Y|X,Z)$.
		\end{lemma}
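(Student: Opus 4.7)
The plan is to reduce this \emph{conditional} subadditivity-type inequality to the unconditional one already established in Lemma \ref{inequality_for_subadditive_property}, by conditioning pointwise on $Z=z$ and then averaging with the appropriate weight.

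First, I would fix an arbitrary $z \in Z$ with $p(z) > 0$ and introduce the auxiliary random variables $X' = X \mid Z = z$ and $Y' = Y \mid Z = z$, whose joint probability distribution is $p(x',y') = p(x,y \mid z)$. A direct check shows that $p(x') = p(x\mid z)$, $p(y') = p(y \mid z)$, and, crucially, $p(y' \mid x') = p(x,y\mid z)/p(x \mid z) = p(y \mid x, z)$. Applying Lemma \ref{inequality_for_subadditive_property} to the pair $(X', Y')$ yields
\begin{equation}
-\sum_{x \in X} (p(x\mid z))^{2k+1} \sum_{y \in Y} (p(y \mid x, z))^{k+r+1} \ln_{\{k,r\}}(p(y \mid x, z)) \leq -\sum_{y \in Y} (p(y\mid z))^{k+r+1} \ln_{\{k,r\}}(p(y\mid z)).
\end{equation}

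The second step is to multiply both sides of this inequality by the nonnegative weight $(p(z))^{2k+1}$ and sum over $z \in Z$. On the left side, the factor $(p(z))^{2k+1}(p(x\mid z))^{2k+1}$ collapses to $(p(z)\,p(x\mid z))^{2k+1} = (p(x,z))^{2k+1}$, so the left side becomes precisely
\begin{equation}
-\sum_{x \in X}\sum_{y \in Y}\sum_{z \in Z} (p(x,z))^{2k+1} (p(y \mid x,z))^{k+r+1} \ln_{\{k,r\}}(p(y \mid x,z)) = S_{\{k,r\}}(Y \mid X, Z),
\end{equation}
by the extended conditional entropy definition given after Definition \ref{conditional_entropy}. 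The right side is, by the same definition specialized to conditioning on $Z$ alone, exactly $S_{\{k,r\}}(Y \mid Z)$.

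I do not foresee a serious obstacle; the only point requiring care is the bookkeeping of the weight factors, to make sure that the $(p(z))^{2k+1}$ from the outer conditioning on $Z$ combines with the $(p(x\mid z))^{2k+1}$ coming from Lemma \ref{inequality_for_subadditive_property} applied inside $Z=z$ to reproduce the defining weight $(p(x,z))^{2k+1}$ of $S_{\{k,r\}}(Y\mid X,Z)$. This is what makes the chosen weighting $(p(\cdot))^{2k+1}$ in Definition \ref{conditional_entropy} the ``correct'' one: it is exactly the power that makes conditional inequalities lift from $Z=z$ slices to their $Z$-averaged versions.
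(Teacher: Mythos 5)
Your proposal is correct, and it is organized differently from the paper's proof. You reduce the three-variable inequality to Lemma \ref{inequality_for_subadditive_property} applied to the conditional distributions $p(\cdot\mid z)$, and then average over $z$ with weight $(p(z))^{2k+1}$; the whole argument then rests on the single algebraic identity $(p(z))^{2k+1}(p(x\mid z))^{2k+1}=(p(x,z))^{2k+1}$, which you correctly flag as the crux. The paper does not invoke the earlier lemma as a black box: it re-runs the same two ingredients from that lemma's proof directly in the conditioned setting, namely the concavity of $-f$ with $f(x)=x^{k+r+1}\ln_{\{k,r\}}(x)$ applied to the identity $\sum_{x}p(x\mid z)p(y\mid x,z)=p(y\mid z)$, followed by the pointwise bound $(p(x\mid z))^{2k+1}\leq p(x\mid z)$ to replace the weight $p(x\mid z)$ by $(p(x\mid z))^{2k+1}$. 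The mathematical content is therefore identical, but your version is more modular and shorter, and it makes transparent \emph{why} the conditional inequality holds: it is exactly the $(p(z))^{2k+1}$-weighted average of the $Z=z$ slices of the unconditional one, which is also a clean justification of the choice of exponent $2k+1$ in Definition \ref{conditional_entropy}. The only points needing care, which you handle, are restricting to $p(z)>0$ (the $p(z)=0$ terms vanish) and verifying that $p(y'\mid x')=p(y\mid x,z)$ so that the earlier lemma really produces the displayed inequality.
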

		
		\begin{proof}
			Observe that the function $f(u) = u^{k + r + 1} \ln_{\{k, r\}}(x)$, where $r > 0, 0 < k \leq \frac{1}{2}$ and $0 \leq u \leq 1$ is a convex function, as well as $f(u) \leq 0$. Therefore, as $0 \leq p(y|z) \leq 1$ we have 
			\begin{equation}
			- f(p(y|z)) = - (p(y|z))^{r + k + 1} \ln_{\{k,r\}}(p(y|z)) > 0.
			\end{equation}	 
			In addition, $0 \leq p(y|x,z) \leq 1$ indicates 
			\begin{equation}
			-p(x|z)f(p(y|x,z)) = p(x|z) (p(y|x,z))^{r + k + 1} \ln_{\{k,r\}}(p(y|x,z)) \geq 0.
			\end{equation} 
			A basic result of conditional probability states that $p(y|z) =  \sum_{x \in X} p(x|z) p(y| x,z)$. Using the concavity property of $-f(u)$ in the expression below we find 
			\begin{equation}
			\begin{split}
			- \sum_{x \in X} p(x|z) (p(y|x,z))^{r + k + 1} \ln_{\{k,r\}}(p(y|x,z)) & = - \sum_{x \in X} p(x|z) f(p(y|x,z))\\
			& \leq - f\left(\sum_{x \in X} p(x|z) p(y| x,z) \right) = - (p(y|z))^{r + k + 1} \ln_{\{k,r\}}(p(y|z)).
			\end{split}
			\end{equation}
			Multiplying both side of the above inequality with $(p(z))^{2k + 1}$ and summing over $Y$ and $Z$ we find
			\begin{equation}
			\begin{split}
			& - \sum_{y \in Y} \sum_{z \in Z} (p(z))^{2k + 1} \sum_{x \in X} p(x|z) (p(y|x,z))^{r + k + 1} \ln_{\{k,r\}}(p(y|x,z)) \\
			\leq & - \sum_{y \in Y} \sum_{z \in Z} (p(z))^{2k + 1} (p(y|z))^{r + k + 1} \ln_{\{k,r\}}(p(y|z)) = S_{\{k,r\}}(Y|Z).
			\end{split}
			\end{equation}
			Note that, $p(x,z)^{2k + 1} = (p(z))^{2k + 1} (p(x|z))^{2k + 1} \leq (p(z))^{2k + 1} p(x|z)$. Therefore,
			\begin{equation}
			\begin{split}
			S_{\{k,r\}}(Y|X,Z) & = - \sum_{x \in X} \sum_{y \in Y} \sum_{z \in Z} (p(x,z))^{2k + 1} (p(y|x,z))^{r + k + 1} \ln_{\{k,r\}}(p(y|x,z)) \\
			& \leq - \sum_{x \in X} \sum_{y \in Y} (p(z))^{2k + 1} \sum_{x \in X} p(x|z) (p(y|x,z))^{r + k + 1} \ln_{\{k,r\}}(p(y|x,z)). \\
			\end{split}
			\end{equation}
			Combining we get $S_{\{k,r\}}(Y|Z) \geq S_{\{k,r\}}(Y|X,Z)$.
		\end{proof}
		
		The above inequality leads us to the strong sub-additivity property of the generalized entropy which is mentioned below.
		
		\begin{theorem}\label{strong_sub_additive_property}
			Given any three random variable $X, Y$ and $Z$ we have
			$$S_{\{k,r\}}(X,Y, Z) + S_{\{k,r\}}(Z) \leq S_{\{k,r\}}(X, Z) + S_{\{k,r\}}(Y,Z).$$
		\end{theorem}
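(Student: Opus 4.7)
The plan is to reduce the inequality to the already-established conditional inequality in Lemma \ref{inequality_for_strongly_sub_additivity_property} by repeatedly applying the chain rules. Since the bulk of the analytical work (convexity of $f(x) = x^{k+r+1}\ln_{\{k,r\}}(x)$ and the conditional-probability averaging) has been absorbed into that lemma, the remainder is essentially a bookkeeping exercise on conditional entropies.

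First I would rewrite the left-hand side using Corollary \ref{chani_rule_1}, which gives $S_{\{k,r\}}(X,Y,Z) = S_{\{k,r\}}(X,Y|Z) + S_{\{k,r\}}(Z)$, and then expand $S_{\{k,r\}}(X,Y|Z)$ using Corollary \ref{chani_rule_2} as $S_{\{k,r\}}(X|Z) + S_{\{k,r\}}(Y|X,Z)$. The result is
\begin{equation*}
S_{\{k,r\}}(X,Y,Z) + S_{\{k,r\}}(Z) = S_{\{k,r\}}(X|Z) + S_{\{k,r\}}(Y|X,Z) + 2 S_{\{k,r\}}(Z).
\end{equation*}
For the right-hand side I would apply Theorem \ref{chain_rule_for_Sharma_MIttal_entropy} to each of the two joint entropies, obtaining $S_{\{k,r\}}(X,Z) = S_{\{k,r\}}(Z) + S_{\{k,r\}}(X|Z)$ and $S_{\{k,r\}}(Y,Z) = S_{\{k,r\}}(Z) + S_{\{k,r\}}(Y|Z)$, so that
\begin{equation*}
S_{\{k,r\}}(X,Z) + S_{\{k,r\}}(Y,Z) = S_{\{k,r\}}(X|Z) + S_{\{k,r\}}(Y|Z) + 2 S_{\{k,r\}}(Z).
\end{equation*}

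Subtracting $S_{\{k,r\}}(X|Z) + 2S_{\{k,r\}}(Z)$ from both sides, the claim reduces to the single inequality $S_{\{k,r\}}(Y|X,Z) \leq S_{\{k,r\}}(Y|Z)$, which is exactly Lemma \ref{inequality_for_strongly_sub_additivity_property}. So the proof should be a short three-line chain: expand the LHS by the two chain-rule corollaries, expand the RHS by the binary chain rule, cancel the common terms $S_{\{k,r\}}(X|Z) + 2 S_{\{k,r\}}(Z)$, and invoke the lemma.

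The step I expect to require the most care is not an analytical obstacle but a verification that the chain rules are being applied to the correct groupings of variables, in particular that Corollary \ref{chani_rule_2} is used with $Z$ playing the role of the conditioning variable (not $X$ or $Y$). I would also briefly remark that no additional hypotheses on the joint distribution of $(X,Y,Z)$ are needed beyond what the earlier lemmas assume, since every identity above is a pointwise algebraic consequence of the product rule of Lemma \ref{product_1} together with the identity $(p(z))^{2k} = 1 + 2k (p(z))^{r+k}\ln_{\{k,r\}}(p(z))$ already exploited in Theorem \ref{chain_rule_for_Sharma_MIttal_entropy}.
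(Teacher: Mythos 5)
Your proposal is correct and follows essentially the same route as the paper: both arguments reduce the inequality, via Theorem \ref{chain_rule_for_Sharma_MIttal_entropy} and Corollaries \ref{chani_rule_1} and \ref{chani_rule_2}, to the single inequality $S_{\{k,r\}}(Y|X,Z) \leq S_{\{k,r\}}(Y|Z)$ of Lemma \ref{inequality_for_strongly_sub_additivity_property}. The only difference is cosmetic bookkeeping (you expand both sides and cancel, while the paper rewrites the right-hand side as the left-hand side plus the nonnegative gap).
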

		
		\begin{proof}
			Theorem \ref{chain_rule_for_Sharma_MIttal_entropy} indicates
			\begin{equation}
			\begin{split}
			S_{\{k,r\}}(X, Z) + S_{\{k,r\}}(Y, Z) = & S_{\{k,r\}}(Z) + S_{\{k,r\}}(X|Z) + S_{\{k,r\}}(Z) + S_{\{k,r\}}(Y|Z)\\
			= & 2 S_{\{k,r\}}(Z) + S_{\{k,r\}}(X|Z) + S_{\{k,r\}}(Y|Z).
			\end{split}
			\end{equation}
			Now, applying the chain rules mentioned in Corollary \ref{chani_rule_2} we find
			\begin{equation}
			S_{\{k,r\}}(X, Z) + S_{\{k,r\}}(Y, Z) = 2 S_{\{k,r\}}(Z) + S_{\{k,r\}}(X,Y|Z) - S_{\{k,r\}}(Y|X,Z) + S_{\{k,r\}}(Y|Z).
			\end{equation} 
			The chain rule in Corollary \ref{chani_rule_1} leads us to
			\begin{equation}
			\begin{split}
			S_{\{k,r\}}(X, Z) + S_{\{k,r\}}(Y, Z) = & 2S_{\{k,r\}}(Z) + S_{\{k,r\}}(X, Y, Z)  - S_{\{k,r\}}(Z) - S_{\{k,r\}}(Y|X,Z) + S_{\{k,r\}}(Y|Z) \\
			= & S_{\{k,r\}}(X, Y, Z) + S_{\{k,r\}}(Z) + S_{\{k,r\}}(Y|Z) - S_{\{k,r\}}(Y|X,Z).
			\end{split}
			\end{equation} 
			Now, Lemma \ref{inequality_for_strongly_sub_additivity_property} indicates $S_{\{k,r\}}(Y|Z) - S_{\{k,r\}}(Y|X,Z) \geq 0$. Therefore, 
			\begin{equation}
			S_{\{k,r\}}(X, Z) + S_{\{k,r\}}(Y, Z) \geq S_{\{k,r\}}(X, Y, Z) + S_{\{k,r\}}(Z).
			\end{equation}
			Hence, the result follows.
		\end{proof}

	\section{Two-parameter generalized divergence}
	
		In the Shannon information theory, the relative entropy, or the Kullback-Leibler (KL) divergence is a measure of difference between two probability distributions. Recall that given two probability distributions $\mathcal{P} = \{p(x)\}_{x \in X}$ and $\mathcal{Q} = \{q(x)\}_{x \in X}$ the Kullback-Leibler divergence \cite{cover2012elements} is defined by
		\begin{equation}
		D(\mathcal{P}||\mathcal{Q}) = \sum_{x \in X} p(x) \ln\left(\frac{p(x)}{q(x)}\right) = - \sum_{x \in X} p(x) \ln\left(\frac{q(x)}{p(x)}\right). 
		\end{equation} 
		We generalize it in terms of the generalized entropy as follows:
		\begin{definition}\label{Sharma_Mittal_relative_entropy}
			(Generalized divergence) Given two probability distributions $\mathcal{P} = \{p(x)\}_{x \in X}$ and $\mathcal{Q} = \{q(x)\}_{x \in X}$ the generalized divergence is represented by
			\begin{equation*}
			D_{\{k,r\}}(\mathcal{P} || \mathcal{Q}) = \sum_{x \in X} p(x) \left(\frac{p(x)}{q(x)} \right)^{r - k} \ln_{\{k,r\}} \left(\frac{p(x)}{q(x)} \right) = - \sum_{x \in X} p(x) \left(\frac{q(x)}{p(x)} \right)^{r + k} \ln_{\{k,r\}} \left(\frac{q(x)}{p(x)} \right),
			\end{equation*}
			where $0 < k \leq \frac{1}{2}$ and $r > 0$.
		\end{definition}
		The equivalence between two expressions of $D_{\{k,r\}}(\mathcal{P} || \mathcal{Q})$ follows from equation (\ref{inversion_under_logarithm}). Putting $k = r = \frac{1 - q}{2}$ in $- \sum_{x \in X} p(x) \left(\frac{q(x)}{p(x)} \right)^{r + k} \ln_{\{k,r\}} \left(\frac{q(x)}{p(x)} \right)$ we find 	
		\begin{equation}
		D_{\left\{\frac{1 - q}{2}, \frac{1 - q}{2} \right\}} = -\sum_{x \in X} p(x) \frac{\left(\frac{q(x)}{p(x)}\right)^{1 - q} - 1}{1 - q} = D_q(\mathcal{P}||\mathcal{Q}),
		\end{equation}
		which is the Tsallis divergence \cite{furuichi2006information}, \cite{furuichi2004fundamental}. Below we discuss a few properties of the generalized divergence.
		
		\begin{lemma}
			(Non-negativity) For any two probability distribution $\mathcal{P}$ and $\mathcal{Q}$ the generalized divergence $D_{\{k,r\}}(\mathcal{P}||\mathcal{Q}) \geq 0$. The equality holds for $\mathcal{P} = \mathcal{Q}$. 
		\end{lemma}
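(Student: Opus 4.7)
The plan is to derive this statement as a direct corollary of the generalized log-sum inequality proved earlier in Theorem \ref{log_sum_inequality}. That inequality is the right tool precisely because it produces a bound whose right-hand side is controlled by the aggregates $a=\sum_i a_i$ and $b=\sum_i b_i$; when $\mathcal{P}$ and $\mathcal{Q}$ are probability distributions those aggregates are both equal to $1$, and the right-hand side collapses to $\ln_{\{k,r\}}(1)=0$.

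Concretely, I would set $a_i=p(x_i)$ and $b_i=q(x_i)$ in Theorem \ref{log_sum_inequality}. Using $\sum_x p(x)=\sum_x q(x)=1$, the right-hand side becomes $1\cdot 1^{r-k}\ln_{\{k,r\}}(1)$, which vanishes because $\ln_{\{k,r\}}(1)=\frac{1-1}{2k}=0$ by Definition \ref{redefined_lnkr}. The left-hand side is exactly the first form
\begin{equation*}
D_{\{k,r\}}(\mathcal{P}\|\mathcal{Q})=\sum_{x\in X}p(x)\left(\frac{p(x)}{q(x)}\right)^{r-k}\ln_{\{k,r\}}\!\left(\frac{p(x)}{q(x)}\right)
\end{equation*}
from Definition \ref{Sharma_Mittal_relative_entropy}, giving $D_{\{k,r\}}(\mathcal{P}\|\mathcal{Q})\geq 0$ at once. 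For the equality claim, when $\mathcal{P}=\mathcal{Q}$ every ratio $p(x)/q(x)$ equals $1$ on the support, so each summand carries the factor $\ln_{\{k,r\}}(1)=0$ and the total vanishes; indices where $p(x)=0$ contribute zero under the convention $x\ln_{\{k,r\}}(x)\to 0$ as $x\to 0^+$ noted after Definition \ref{modified_definition_of_SM_entropy}.

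I do not expect a serious obstacle: since the parameter range $0<k\leq\tfrac12$, $r>0$ is precisely what powers the convexity of $f(x)=x^{r-k+1}\ln_{\{k,r\}}(x)$ in Lemma \ref{convexity_for_log_sum} and hence of Theorem \ref{log_sum_inequality}, the hypotheses match automatically. The only minor subtlety is the standard measure-theoretic convention for terms with $p(x)=0$ or $q(x)=0$, which is handled by the limiting convention just mentioned; note that if $q(x)=0$ while $p(x)>0$ the divergence is taken to be $+\infty$, in which case nonnegativity is trivial.
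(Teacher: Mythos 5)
Your proof is correct, but it follows a different route from the paper's. The paper works with the second form of the divergence, $D_{\{k,r\}}(\mathcal{P}\|\mathcal{Q}) = -\sum_x p(x)\left(\frac{q(x)}{p(x)}\right)^{r+k}\ln_{\{k,r\}}\left(\frac{q(x)}{p(x)}\right)$, and applies Jensen's inequality directly with the convex function $f(t)=-t^{r+k}\ln_{\{k,r\}}(t)=\frac{1-t^{2k}}{2k}$ from Lemma \ref{convexity_for_relative_entropy}, concluding because $\ln_{\{k,r\}}\left(\sum_x p(x)\frac{q(x)}{p(x)}\right)=\ln_{\{k,r\}}(1)=0$. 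You instead take the first form and feed $a_i=p(x_i)$, $b_i=q(x_i)$ into the generalized log-sum inequality of Theorem \ref{log_sum_inequality}, whose underlying convex function is the different one $x^{r-k+1}\ln_{\{k,r\}}(x)=\frac{x-x^{1-2k}}{2k}$ from Lemma \ref{convexity_for_log_sum}. Both arguments are ultimately Jensen's inequality in disguise and both land on $\ln_{\{k,r\}}(1)=0$; yours has the modest advantages of reusing an already-packaged inequality (mirroring the classical derivation of nonnegativity of Kullback--Leibler divergence from the log-sum inequality) and of sidestepping the slightly garbled Jensen step in the paper's own display, where the weight factor $\sum_x p(x)\left(\frac{q(x)}{p(x)}\right)^{r+k}$ is pulled outside in a way that is not literally Jensen but happens to be harmless since it multiplies a vanishing logarithm. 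Your treatment of the equality direction ($\mathcal{P}=\mathcal{Q}$ forces every summand to carry the factor $\ln_{\{k,r\}}(1)=0$) matches the paper's, and like the paper you only prove the stated implication, not its converse. The added remarks on the conventions for $p(x)=0$ or $q(x)=0$ are sensible and consistent with the paper's extension lemma.
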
 
		\begin{proof}
			It can be proved that the function $-u^{k + r}\ln_{\{k,r\}}(u)$ is a convex function for $u \geq 0$, $0 \leq k \leq \frac{1}{2}$ and $r > 0$. Therefore, 
			\begin{equation}
			\begin{split} 
			D_{\{k,r\}}(\mathcal{P}||\mathcal{Q}) & = - \sum_{x \in X} p(x) \left(\frac{q(x)}{p(x)}\right)^{r + k} \ln_{\{k,r\}}\left(\frac{q(x)}{p(x)}\right) \\
			& \geq - \left[\sum_{x \in X} p(x) \left(\frac{q(x)}{p(x)}\right)^{r + k}\right] \ln_{\{k,r\}}\left(\sum_{x \in X} p(x) \frac{q(x)}{p(x)}\right).
			\end{split} 
			\end{equation}
			Now, $\ln_{\{k,r\}}\left(\sum_{x \in X} p(x) \frac{q(x)}{p(x)}\right) = \ln_{\{k,r\}}\left(\sum_{x \in X} q(x) \right) = \ln_{\{k,r\}}(1) = 0$. Note that, if $\mathcal{P} = \mathcal{Q}$ then 
			\begin{equation}
			\begin{split}
			D_{\{k,r\}}(\mathcal{P} || \mathcal{P}) = - \sum_{x \in X} p(x) \left(\frac{p(x)}{p(x)} \right)^{r + k} \ln_{\{k,r\}} \left(\frac{p(x)}{p(x)} \right) = - \sum_{x \in X} p(x) \ln_{\{k,r\}}(1) = 0.
			\end{split}
			\end{equation}
		\end{proof}
		
		\begin{lemma}
			(Symmetry) Let $\mathcal{P}' = \{p'_i\}$ and $\mathcal{Q}' = \{q'_i\}$ be two probability distributions, such that, $p(x)' = p_{\pi(i)}$ and $q(x)' = q_{\pi(i)}$ for a permutation $\pi$ and probability distributions $\mathcal{P} = \{p(x)\}_{x \in X}$ and $\mathcal{Q} = \{q(x)\}_{x \in X}$. Then $D_{\{k,r\}}(\mathcal{P}'||\mathcal{Q}') = D_{\{k,r\}}(\mathcal{P}||\mathcal{Q})$.
		\end{lemma}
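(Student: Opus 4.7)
The plan is to unfold the definition of the generalized Tsallis relative entropy and observe that a simultaneous permutation of the indices of $\mathcal{P}$ and $\mathcal{Q}$ simply reorders the terms of a finite sum. Concretely, I would start from the second form in Definition \ref{Sharma_Mittal_relative_entropy}:
$$D_{\{k,r\}}(\mathcal{P}'||\mathcal{Q}') = -\sum_{i} p'_i \left(\frac{q'_i}{p'_i}\right)^{r+k} \ln_{\{k,r\}}\!\left(\frac{q'_i}{p'_i}\right),$$
and then substitute $p'_i = p_{\pi(i)}$ and $q'_i = q_{\pi(i)}$ so that the entire summand becomes a function evaluated at the single index $\pi(i)$.

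The key (and essentially only) step is the change of summation variable $j = \pi(i)$. Since $\pi$ is a bijection of the underlying index set, as $i$ ranges over all indices so does $j$, and the sum is therefore unaffected by the relabelling. Rewriting yields
$$D_{\{k,r\}}(\mathcal{P}'||\mathcal{Q}') = -\sum_{j} p_j \left(\frac{q_j}{p_j}\right)^{r+k} \ln_{\{k,r\}}\!\left(\frac{q_j}{p_j}\right) = D_{\{k,r\}}(\mathcal{P}||\mathcal{Q}),$$
which is the desired equality.

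There is no genuine obstacle here: the statement is a formal invariance under a bijective reindexing of a finite sum, and the summand $p_j (q_j/p_j)^{r+k} \ln_{\{k,r\}}(q_j/p_j)$ depends only on the pair $(p_j, q_j)$ attached to a given index, not on the index itself. The mildest point of care is the convention for zero probabilities, which is already settled by the remark following Definition \ref{modified_definition_of_SM_entropy} ($\lim_{x\to 0^+} x \ln_{\{k,r\}}(x) = 0$); this ensures that terms corresponding to vanishing $p_j$ are treated as $0$ on both sides, preserving the equality.
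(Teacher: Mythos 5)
Your proposal is correct and follows essentially the same route as the paper: both observe that the simultaneous permutation merely reorders the terms of a finite sum, leaving its value unchanged. Your write-up is simply a more explicit version (with the change of summation variable and the zero-probability convention spelled out) of the paper's one-line argument.
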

		\begin{proof}
			The permutation $\pi$ alters the position of $p(x) \left(\frac{p(x)}{q(x)}\right)^{r - k} \ln_{\{k,r\}} \left(\frac{p(x)}{q(x)}\right)$ under addition and keeps the sum $D_{\{k,r\}}(\mathcal{P}||\mathcal{Q})$, unaltered. Hence, the proof follows trivially.
		\end{proof}
		
		\begin{lemma}
			(Possibility of extension) Let $\mathcal{P}' = \mathcal{P} \cup \{0\}$ and $\mathcal{Q}' = \mathcal{Q} \cup \{0\}$, then $D_{\{k,r\}}(\mathcal{P}'||\mathcal{Q}') = D_{\{k,r\}}(\mathcal{P}||\mathcal{Q})$.
		\end{lemma}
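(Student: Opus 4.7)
The plan is to compare the two sums directly. Writing the extended index set as $X' = X \cup \{x^*\}$ with $p'(x^*) = q'(x^*) = 0$, the defining sum for $D_{\{k,r\}}(\mathcal{P}'||\mathcal{Q}')$ differs from $D_{\{k,r\}}(\mathcal{P}||\mathcal{Q})$ only by the single new summand at $x^*$, so the whole task reduces to showing that this extra summand vanishes.

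Since the summand carries an outer factor $p'(x^*) = 0$, the natural convention is that it equals $0$. This matches the convention already in force in the paper: immediately after Definition~\ref{modified_definition_of_SM_entropy}, the entropy summand is defined to be zero at $p(x) = 0$ on the strength of $\lim_{x \to 0^+} x \ln_{\{k,r\}}(x) = 0$. To extend this to the relative-entropy summand I would verify by a short limit: parametrising the appended pair as $(t\epsilon, \epsilon)$ for an arbitrary fixed $t > 0$ and letting $\epsilon \to 0^+$, the contribution becomes $t\epsilon \cdot t^{r-k} \ln_{\{k,r\}}(t) = \epsilon \cdot t^{r-k+1} \ln_{\{k,r\}}(t) \to 0$, independently of $t$. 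Hence the $(0,0)$ summand consistently vanishes, and the two divergences agree term by term.

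The only potential obstacle is the formal $0/0$ arising inside the new summand. Once the standard convention above is invoked (or justified via the limit just described), the argument collapses to a single line of set-theoretic accounting, so I do not anticipate any substantive difficulty. The same reasoning, incidentally, would cover the case of appending any finite number of zero pairs, showing that the generalised Tsallis divergence is insensitive to the support representation of the underlying distributions.
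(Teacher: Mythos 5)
Your proposal is correct and takes essentially the same route as the paper: both reduce the claim to showing that the single appended summand at the new index, formally $0\cdot\left(\frac{0}{0}\right)^{r+k}\ln_{\{k,r\}}\left(\frac{0}{0}\right)$, vanishes under a limiting convention. The only difference is that you verify the limit along rays of fixed ratio $(t\epsilon,\epsilon)$, whereas the paper evaluates the genuine two-variable limit $\lim_{(x,y)\to(0,0)} x\left(\frac{y}{x}\right)^{r+k}\ln_{\{k,r\}}\left(\frac{y}{x}\right)=0$ via iterated limits and the Moore--Osgood theorem; since that joint limit does exist here, your radial computation reaches the same convention, though strictly speaking agreement of radial limits alone would not certify the joint limit.
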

		\begin{proof}
			Define $0\left(\frac{0}{0}\right)^{r + k}\ln_{\{k,r\}}\left(\frac{0}{0}\right) = \lim_{(x,y) \rightarrow (0,0)} x \left(\frac{y}{x}\right)^{r + k} \ln_{\{k,r\}} \left(\frac{y}{x}\right)$. Note that, 
			$$\lim\limits_{x \rightarrow 0} \lim\limits_{y \rightarrow 0} x \left(\frac{y}{x}\right)^{r + k} \ln_{\{k,r\}} \left(\frac{y}{x}\right) = 0.$$ 
			In addition, we can write that $\lim_{y \rightarrow 0} \lim_{x \rightarrow 0} x \left(\frac{y}{x}\right)^{r + k} \ln_{\{k,r\}} \left(\frac{y}{x}\right) = 0$. Now applying Moore-Osgood Theorem \cite{stewart1995multivariable} we find that $\lim_{(x, y) \rightarrow (0, 0)} x \left(\frac{y}{x}\right)^{r + k} \ln_{\{k,r\}} \left(\frac{y}{x}\right) = 0$. Therefore, $0\ln_{\{k,r\}}\left(\frac{0}{0}\right) = 0$. Hence, $D_{\{k,r\}}(\mathcal{P}'||\mathcal{Q}') = D_{\{k,r\}}(\mathcal{P}||\mathcal{Q})$.
		\end{proof}
		
		Given two probability distributions $\mathcal{P} = \{p(x)\}_{x \in X}$ and $\mathcal{Q} = \{q(y)\}_{y \in Y}$ we can define a joint probability distribution $\mathcal{P} \otimes \mathcal{Q} = \{p(x)q(y)\}_{(x,y) \in X \otimes Y}$. Note that, for all $x \in X$ and $y \in Y$ we have $0 \leq p(x)q(y) \le 1$. In addition, $\sum_{x \in X} \sum_{y \in Y} p(x) q(y) = 1$. Now, we have the following theorem.
		
		\begin{theorem}\label{pseudo_additivity_of_divergence}
			(Pseudo-additivity) Given probability distributions $\mathcal{P}^{(1)} = \{p^{(1)}(x)\}_{x \in X}$, $\mathcal{Q}^{(1)} = \{q^{(1)}(x)\}_{x \in X}$, $\mathcal{P}^{(2)} = \{p^{(2)}(y)\}_{y \in Y}$ and $\mathcal{Q}^{(2)} = \{q^{(2)}(y)\}_{y \in Y}$ we have 
			\begin{equation*}
			\begin{split}	
			D_{\{k,r\}}(\mathcal{P}^{(1)} \otimes \mathcal{P}^{(2)} || \mathcal{Q}^{(1)} \otimes \mathcal{Q}^{(2)}) =  D_{\{k,r\}}(\mathcal{P}^{(1)} || \mathcal{Q}^{(1)}) + D_{\{k,r\}}(\mathcal{P}^{(2)} || \mathcal{Q}^{(2)}) - 2k D_{\{k,r\}}(\mathcal{P}^{(1)} || \mathcal{Q}^{(1)}) D_{\{k,r\}}(\mathcal{P}^{(2)} || \mathcal{Q}^{(2)}).			
			\end{split}
			\end{equation*}
		\end{theorem}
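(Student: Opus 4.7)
The plan is to mimic the structure of the chain rule proof (Theorem \ref{chain_rule_for_Sharma_MIttal_entropy}), but applied to ratios of probabilities rather than to the probabilities themselves. Starting from the second form in Definition \ref{Sharma_Mittal_relative_entropy}, I use the fact that for product distributions
$$\frac{(q^{(1)}\otimes q^{(2)})(x,y)}{(p^{(1)}\otimes p^{(2)})(x,y)} \;=\; \frac{q^{(1)}(x)}{p^{(1)}(x)}\cdot\frac{q^{(2)}(y)}{p^{(2)}(y)},$$
so the logarithm of a product of ratios appears naturally. The key identity to invoke is Lemma \ref{product_1}, the product rule
$$(ab)^{r+k}\ln_{\{k,r\}}(ab) = a^{r+k}\ln_{\{k,r\}}(a) + b^{r+k}\ln_{\{k,r\}}(b) + 2k\, a^{r+k}b^{r+k}\ln_{\{k,r\}}(a)\ln_{\{k,r\}}(b),$$
applied with $a = q^{(1)}(x)/p^{(1)}(x)$ and $b = q^{(2)}(y)/p^{(2)}(y)$.

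The computation then splits into three summands. First, I multiply the identity by $-p^{(1)}(x)p^{(2)}(y)$ and sum over $x\in X$, $y\in Y$. On the left this gives exactly $D_{\{k,r\}}(\mathcal{P}^{(1)}\otimes \mathcal{P}^{(2)}\,\|\,\mathcal{Q}^{(1)}\otimes \mathcal{Q}^{(2)})$. The first term on the right factors as
$$\Bigl[-\sum_{x}p^{(1)}(x)\,a^{r+k}\ln_{\{k,r\}}(a)\Bigr]\cdot\Bigl[\sum_{y}p^{(2)}(y)\Bigr] = D_{\{k,r\}}(\mathcal{P}^{(1)}\,\|\,\mathcal{Q}^{(1)}),$$
using $\sum_y p^{(2)}(y)=1$, and symmetrically the second term produces $D_{\{k,r\}}(\mathcal{P}^{(2)}\,\|\,\mathcal{Q}^{(2)})$.

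The third term is where I have to be careful with signs. After multiplying by $-p^{(1)}(x)p^{(2)}(y)$ and summing, the double sum factorizes as
$$-2k\Bigl[\sum_{x}p^{(1)}(x)\,a^{r+k}\ln_{\{k,r\}}(a)\Bigr]\Bigl[\sum_{y}p^{(2)}(y)\,b^{r+k}\ln_{\{k,r\}}(b)\Bigr].$$
Each bracket equals $-D_{\{k,r\}}(\mathcal{P}^{(i)}\,\|\,\mathcal{Q}^{(i)})$ by the definition, so the two minus signs combined with the prefactor $-2k$ yield exactly $-2k\, D_{\{k,r\}}(\mathcal{P}^{(1)}\,\|\,\mathcal{Q}^{(1)})\,D_{\{k,r\}}(\mathcal{P}^{(2)}\,\|\,\mathcal{Q}^{(2)})$.

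The whole argument is essentially a clean substitution; there is no real analytic obstacle and no concavity/convexity estimate needed, unlike the proofs of subadditivity. The only point requiring care is tracking the three minus signs in the cross term so that the coefficient of the product of relative entropies comes out as $-2k$ and not $+2k$. Once that sign is correct, the three pieces assemble into the claimed identity.
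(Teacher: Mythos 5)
Your proposal is correct and takes essentially the same route as the paper's own proof: both apply the product rule of Lemma \ref{product_1} to $a=q^{(1)}(x)/p^{(1)}(x)$ and $b=q^{(2)}(y)/p^{(2)}(y)$, multiply by $-p^{(1)}(x)p^{(2)}(y)$, sum over $x$ and $y$, and use normalization to factor the three terms into the two relative entropies and the cross term. Your sign bookkeeping for the cross term ($-2k$ times two negative brackets giving $-2k\,D_{\{k,r\}}(\mathcal{P}^{(1)}\|\mathcal{Q}^{(1)})\,D_{\{k,r\}}(\mathcal{P}^{(2)}\|\mathcal{Q}^{(2)})$) is exactly what the paper's displayed computation yields.
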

		
		\begin{proof}
			Recall the product rule of $\ln_{\{k,r\}}(xy)$ mentioned in Lemma \ref{product_1}. Expanding the logarithm we find 
			\begin{equation}
			\begin{split}
			& \left(\frac{q^{(1)}(x) q^{(2)}(y)}{p^{(1)}(x) p^{(2)}(y)}\right)^{r + k} \ln_{\{k,r\}} \left(\frac{q^{(1)}(x) q^{(2)}(y)}{p^{(1)}(x) p^{(2)}(y)}\right)\\
			= &\left(\frac{q^{(1)}(x)}{p^{(1)}(x)}\right)^{r + k} \ln_{\{k,r\}} \left(\frac{q^{(1)}(x) }{p^{(1)}(x) }\right) + \left(\frac{q^{(2)}(y)}{p^{(2)}(y)}\right)^{r + k} \ln_{\{k,r\}} \left(\frac{q^{(2)}(y) }{p^{(2)}(y) }\right)\\
			& + 2k \left(\frac{q^{(1)}(x)}{p^{(1)}(x)}\right)^{r + k} \ln_{\{k,r\}} \left(\frac{q^{(1)}(x) }{p^{(1)}(x) }\right)  \left(\frac{q^{(2)}(y)}{p^{(2)}(y)}\right)^{r + k} \ln_{\{k,r\}} \left(\frac{q^{(2)}(y) }{p^{(2)}(y) }\right).
			\end{split}
			\end{equation}
			Multiplying $p^{(1)}(x) p^{(2)}(y)$ with both side we find
			\begin{equation} 
			\begin{split}
			& - p^{(1)}(x) p^{(2)}(y) \left(\frac{q^{(1)}(x) q^{(2)}(y)}{p^{(1)}(x) p^{(2)}(y)}\right)^{r + k} \ln_{\{k,r\}} \left(\frac{q^{(1)}(x) q^{(2)}(y)}{p^{(1)}(x) p^{(2)}(y)}\right)\\
			= & - p^{(1)}(x) \left(\frac{q^{(1)}(x)}{p^{(1)}(x)}\right)^{r + k} \ln_{\{k,r\}} \left(\frac{q^{(1)}(x) }{p^{(1)}(x) }\right) p^{(2)}(y) \\
			& - p^{(2)}(y) \left(\frac{q^{(2)}(y)}{p^{(2)}(y)}\right)^{r + k} \ln_{\{k,r\}} \left(\frac{q^{(2)}(y) }{p^{(2)}(y) }\right) p^{(1)}(x)\\
			& - 2k \times p^{(1)}(x)\left(\frac{q^{(1)}(x)}{p^{(1)}(x)}\right)^{r + k} \ln_{\{k,r\}} \left(\frac{q^{(1)}(x) }{p^{(1)}(x) }\right) \\
			& \times p^{(2)}(y) \left(\frac{q^{(2)}(y)}{p^{(2)}(y)}\right)^{r + k} \ln_{\{k,r\}} \left(\frac{q^{(2)}(y) }{p^{(2)}(y) }\right).
			\end{split}
			\end{equation}
			Now, applying Definition \ref{Sharma_Mittal_relative_entropy} we find $D_{\{k,r\}}(\mathcal{P}^{(1)} \otimes \mathcal{P}^{(2)} || \mathcal{Q}^{(1)} \otimes \mathcal{Q}^{(2)})$
			\begin{equation}
			\begin{split}
			= & - \left[\sum_{x \in X} p^{(1)}(x) \left(\frac{q^{(1)}(x)}{p^{(1)}(x)}\right)^{r + k} \ln_{\{k,r\}} \left(\frac{q^{(1)}(x) }{p^{(1)}(x) }\right) \right] \left[ \sum_{y \in Y} p^{(2)}(y) \right] \\
			& - \left[ \sum_{y \in Y} p^{(2)}(y) \left(\frac{q^{(2)}(y)}{p^{(2)}(y)}\right)^{r + k} \ln_{\{k,r\}} \left(\frac{q^{(2)}(y) }{p^{(2)}(y) }\right) \right] \left[\sum_{x \in X} p^{(1)}(x) \right] \\
			& - 2k \times \left[ \sum_{x \in X} p^{(1)}(x) \left(\frac{q^{(1)}(x)}{p^{(1)}(x)}\right)^{r + k} \ln_{\{k,r\}} \left(\frac{q^{(1)}(x) }{p^{(1)}(x) }\right) \right] \\
			& \times \left[ \sum_{y \in Y} p^{(2)}(y) \left(\frac{q^{(2)}(y)}{p^{(2)}(y)}\right)^{r + k} \ln_{\{k,r\}} \left(\frac{q^{(2)}(y) }{p^{(2)}(y) }\right) \right] \\
			= & D_{\{k,r\}}(\mathcal{P}^{(1)} || \mathcal{Q}^{(1)}) + D_{\{k,r\}}(\mathcal{P}^{(2)} || \mathcal{Q}^{(2)}) - 2k D_{\{k,r\}}(\mathcal{P}^{(1)} || \mathcal{Q}^{(1)}) D_{\{k,r\}}(\mathcal{P}^{(2)} || \mathcal{Q}^{(2)}).
			\end{split}
			\end{equation}
		\end{proof}
		
		The next theorem needs the log-sum inequality for $\ln_{\{k , r\}}$, which we mention in the next lemma.
		\begin{lemma}\label{log_sum_inequality}
			Let $a_1, a_2, \dots a_n$ and $b_1, b_2, \dots b_n$ be non-negative numbers. In addition, $a = \sum_{i = 1}^n a_i$ and $b = \sum_{i = 1}^n b_i$. Then,
			$$\sum_{i = 1}^n a_i \left(\frac{a_i}{b_i}\right)^{r - k}\ln_{\{k,r\}} \left(\frac{a_i}{b_i}\right) \geq a \left(\frac{a}{b}\right)^{r - k} \ln_{\{k,r\}} \left(\frac{a}{b}\right).$$
		\end{lemma}
		\begin{proof}
			\begin{equation}
			\begin{split}
			\sum_{i = 1}^n a_i \left(\frac{a_i}{b_i}\right)^{r - k}\ln_{\{k,r\}} \left(\frac{a_i}{b_i}\right) & = b \sum_{i = 1}^n \frac{b_i}{b} \frac{a_i}{b_i} \left(\frac{a_i}{b_i}\right)^{r - k} \ln_{\{k,r\}} \left(\frac{a_i}{b_i}\right) = b \sum_{i = 1}^n \frac{b_i}{b} f\left(\frac{a_i}{b_i}\right).
			\end{split}
			\end{equation}
			We can prove that the function $f(x) = x^{r - k + 1} \ln_{\{k,r\}}(x)$ is a convex function $x > 0$ and for $0 < k \leq \frac{1}{2}$. Therefore, 
			\begin{equation}
			\begin{split}
			\sum_{i = 1}^n a_i \left(\frac{a_i}{b_i}\right)^{r - k} \ln_{\{k,r\}} \left(\frac{a_i}{b_i}\right) & \geq b f\left(\sum_{i = 1}^n \frac{b_i}{b} \frac{a_i}{b_i}\right) = b f\left(\frac{1}{b} \sum_{i = 1}^n a_i\right) = b f\left(\frac{a}{b}\right) = b \left(\frac{a}{b}\right)^{r - k + 1}\ln_{\{k,r\}}\left(\frac{a}{b}\right),
			\end{split}
			\end{equation}
			which indicates the proof.
		\end{proof}
		
		\begin{theorem}\label{joint_convexity_of_divergence}
			(Joint convexity) Let $\mathcal{P}^{(k)} = \{p^{(k)}(x)\}_{x \in X}$ and $\mathcal{Q}^{(k)} = \{q^{(k)}(x)\}_{x \in X}$ for $k = 1, 2$ are probability distributions. Construct new probability distributions $(1 - \lambda)\mathcal{P}^{(1)} + \lambda \mathcal{P}^{(2)} = \{(1 - \lambda)p^{(1)}(x) + \lambda p^{(2)}(x)\}_{x \in X}$, and $(1 - \lambda)\mathcal{Q}^{(1)} + \lambda \mathcal{Q}^{(2)} = \{(1 - \lambda)q^{(1)}(x) + \lambda q^{(2)}(x)\}_{x \in X}$ as convex combinations. Then,
			\begin{equation*} 
			\begin{split} 
			D_{\{k, r\}}((1 - \lambda)\mathcal{P}^{(1)} & + \lambda \mathcal{P}^{(2)} || (1 - \lambda) \mathcal{Q}^{(1)} + \lambda \mathcal{Q}^{(2)}) \leq (1 - \lambda) D_{\{k, r\}}(\mathcal{P}^{(1)} || \mathcal{Q}^{(1)}) + \lambda D_{\{k, r\}}(\mathcal{P}^{(2)} || \mathcal{Q}^{(2)}).
			\end{split} 
			\end{equation*} 
		\end{theorem}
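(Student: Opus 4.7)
The plan is to recognize that $D_{\{k,r\}}(\mathcal{P}\|\mathcal{Q})$ is an $f$-divergence and then invoke the standard joint convexity argument via a perspective function. Starting from the definition, I rewrite each summand as
$$p(x)\left(\frac{p(x)}{q(x)}\right)^{r-k}\ln_{\{k,r\}}\left(\frac{p(x)}{q(x)}\right) = q(x)\cdot\left(\frac{p(x)}{q(x)}\right)^{r-k+1}\ln_{\{k,r\}}\left(\frac{p(x)}{q(x)}\right) = q(x)\,f\!\left(\frac{p(x)}{q(x)}\right),$$
where $f(t)=t^{r-k+1}\ln_{\{k,r\}}(t)$. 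Lemma \ref{convexity_for_log_sum} gives exactly that $f$ is convex on $(0,\infty)$ for the allowed parameter range, which is the only analytic input I need from the deformed logarithm machinery.

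Next I prove the perspective-function lemma: for $a_1,a_2\ge 0$ and $b_1,b_2>0$, the map $(a,b)\mapsto b\,f(a/b)$ is jointly convex. Set $\alpha=(1-\lambda)a_1+\lambda a_2$ and $\beta=(1-\lambda)b_1+\lambda b_2$, and write
$$\frac{\alpha}{\beta}=\frac{(1-\lambda)b_1}{\beta}\cdot\frac{a_1}{b_1}+\frac{\lambda b_2}{\beta}\cdot\frac{a_2}{b_2},$$
which expresses $\alpha/\beta$ as a convex combination because the two weights are nonnegative and sum to $1$. Applying convexity of $f$ and multiplying by $\beta$ yields
$$\beta\,f\!\left(\frac{\alpha}{\beta}\right)\le (1-\lambda)\,b_1 f(a_1/b_1)+\lambda\,b_2 f(a_2/b_2).$$

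I then apply this pointwise with $a_i=p^{(i)}(x)$ and $b_i=q^{(i)}(x)$ for each $x\in X$ and sum over $x$; the right-hand side collapses to $(1-\lambda)D_{\{k,r\}}(\mathcal{P}^{(1)}\|\mathcal{Q}^{(1)})+\lambda D_{\{k,r\}}(\mathcal{P}^{(2)}\|\mathcal{Q}^{(2)})$, and the left-hand side is precisely $D_{\{k,r\}}((1-\lambda)\mathcal{P}^{(1)}+\lambda\mathcal{P}^{(2)}\,\|\,(1-\lambda)\mathcal{Q}^{(1)}+\lambda\mathcal{Q}^{(2)})$. Boundary cases where some $q^{(i)}(x)=0$ are handled by the extension convention already established for the divergence.

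The main obstacle is really a bookkeeping one rather than a deep one: the perspective-convexity step is standard but must be carried out carefully because the convex combination weights depend on $x$, and one has to be sure the parameter restriction $0<k\le\tfrac{1}{2}$, $r>0$ is only used where Lemma \ref{convexity_for_log_sum} is invoked. Once the reformulation $D_{\{k,r\}}(\mathcal{P}\|\mathcal{Q})=\sum_x q(x)f(p(x)/q(x))$ is in hand, nothing specific to the two-parameter deformed logarithm is needed beyond convexity of $f$.
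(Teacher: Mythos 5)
Your proof is correct and follows essentially the same route as the paper: the paper applies its generalized log-sum inequality (Theorem \ref{log_sum_inequality}) with $n=2$ to the pairs $(1-\lambda)p^{(i)}(x),\,(1-\lambda)q^{(i)}(x)$ and $\lambda p^{(i)}(x),\,\lambda q^{(i)}(x)$, and that inequality is itself proved by exactly the perspective-function argument you give, resting on the convexity of $f(t)=t^{r-k+1}\ln_{\{k,r\}}(t)$ from Lemma \ref{convexity_for_log_sum}. The only difference is organizational: you re-derive the two-point perspective inequality directly rather than citing the already-established log-sum inequality.
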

		
		\begin{proof}
			Note that, $D_{\{k, r\}}((1 - \lambda)\mathcal{P}^{(1)} + \lambda \mathcal{P}^{(2)} || (1 - \lambda) \mathcal{Q}^{(1)} + \lambda \mathcal{Q}^{(2)}) = $
			\begin{equation}
			\sum_{x \in X} ((1 - \lambda)p^{(1)}(x) + \lambda p^{(2)}(x)) \left(\frac{(1 - \lambda) p^{(1)}(x) + \lambda p^{(2)}(x)}{(1 - \lambda) q^{(1)}(x) + \lambda q^{(2)}(x)}\right)^{r - k} \ln_{\{k,r\}} \left(\frac{(1 - \lambda)p^{(1)}(x) + \lambda p^{(2)}(x)}{(1 - \lambda) q^{(1)}(x) + \lambda q^{(2)}(x)}\right).
			\end{equation}
			Now, applying the log-sum inequality stated in Lemma \ref{log_sum_inequality} we find
			\begin{equation}
			\begin{split}
			& ((1 - \lambda) p^{(1)}(x) + \lambda p^{(2)}(x)) \left(\frac{(1 - \lambda) p^{(1)}(x) + \lambda p^{(2)}(x)}{(1 - \lambda) q^{(1)}(x) + \lambda q^{(2)}(x)}\right)^{r - k} \ln_{\{k,r\}} \left(\frac{(1 - \lambda) p^{(1)}(x) + \lambda p^{(2)}(x)}{(1 - \lambda) q^{(1)}(x) + \lambda q^{(2)}(x)}\right)\\
			\leq & (1 - \lambda) p^{(1)}(x) \left(\frac{(1 - \lambda) p^{(1)}(x)}{(1 - \lambda) q^{(1)}(x)}\right)^{r - k} \ln_{\{k,r\}} \left(\frac{(1 - \lambda) p^{(1)}(x)}{(1 - \lambda) q^{(1)}(x)}\right) + \lambda p^{(2)}(x) \left(\lambda \frac{p^{(2)}(x)}{\lambda q^{(2)}(x)}\right)^{r - k} \ln_{\{k,r\}} \left(\frac{\lambda p^{(2)}(x)}{\lambda q^{(2)}(x)}\right).
			\end{split}
			\end{equation}
			Summing over $x$, we find the result.
		\end{proof}
		
		Consider a transition probability matrix $W = (w_{j,i})_{m \times n}$, such that, $\sum_{j = 1}^m w_{j,i} = 1$ for all $i = 1, 2, \dots n$. Let $\mathcal{P} = \{p_i^{(in)}\}_{i = 1}^n$ and $\mathcal{Q} = \{q_i^{(in)}\}_{i = 1}^n$ be two probability distributions. After a transition with $W$ the new probability distributions are $W\mathcal{P} = \{p_j^{(out)}\}_{j = 1}^m$ and $W\mathcal{Q} = \{q_j^{(out)}\}_{j = 1}^m$, respectively, where $p_j^{(out)} = \sum_{i = 1}^n w_{j,i} p_i^{(in)}$, and $q_j^{(out)} = \sum_{i = 1}^n w_{j,i} q_i^{(in)}$. Now, we have the following theorem.
		
		\begin{theorem}\label{information_monotonicity}
			(Information monotonicity) Given probability distributions $\mathcal{P}$, $\mathcal{Q}$ and transition probability matrix $W$ we have $D_{\{k,r\}}(W\mathcal{P}|| W\mathcal{Q}) \leq D_{\{k,r\}}(\mathcal{P}|| \mathcal{Q})$.
		\end{theorem}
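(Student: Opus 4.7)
The plan is to reduce the information-monotonicity inequality to a fiber-wise application of the generalized log-sum inequality (Theorem \ref{log_sum_inequality}). The intuition is that $W$ aggregates the input coordinates indexed by $i$ into each output coordinate $j$ with weights $w_{j,i}$, and Theorem \ref{log_sum_inequality} is exactly the statement that such aggregation can only decrease sums of the form $\sum_i a_i (a_i/b_i)^{r-k}\ln_{\{k,r\}}(a_i/b_i)$, which is precisely the expression appearing in the first form of $D_{\{k,r\}}$ in Definition \ref{Sharma_Mittal_relative_entropy}.

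First I would fix an output index $j \in \{1, \dots, m\}$ and apply Theorem \ref{log_sum_inequality} to the non-negative sequences $a_i = w_{j,i}\, p_i^{(in)}$ and $b_i = w_{j,i}\, q_i^{(in)}$. By the definition of the action of $W$, $a = \sum_i a_i = p_j^{(out)}$ and $b = \sum_i b_i = q_j^{(out)}$, and the factor $w_{j,i}$ cancels inside the argument of $\ln_{\{k,r\}}$ because $a_i/b_i = p_i^{(in)}/q_i^{(in)}$. The log-sum inequality then yields
\begin{equation*}
\sum_{i=1}^n w_{j,i}\, p_i^{(in)} \left(\frac{p_i^{(in)}}{q_i^{(in)}}\right)^{r-k} \ln_{\{k,r\}} \left(\frac{p_i^{(in)}}{q_i^{(in)}}\right) \geq p_j^{(out)} \left(\frac{p_j^{(out)}}{q_j^{(out)}}\right)^{r-k} \ln_{\{k,r\}} \left(\frac{p_j^{(out)}}{q_j^{(out)}}\right).
\end{equation*}

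Next I would sum this inequality over $j = 1, \dots, m$. The right-hand side then becomes $D_{\{k,r\}}(W\mathcal{P} || W\mathcal{Q})$ by Definition \ref{Sharma_Mittal_relative_entropy}. On the left-hand side I would interchange the two sums and use the column-stochastic condition $\sum_{j=1}^m w_{j,i} = 1$ to collapse the $j$-sum, which produces exactly $D_{\{k,r\}}(\mathcal{P} || \mathcal{Q})$, giving the desired inequality.

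The main obstacle here is bookkeeping rather than analytic difficulty: I must verify that the exponent $r-k$ in Theorem \ref{log_sum_inequality} matches the first form of $D_{\{k,r\}}$ in Definition \ref{Sharma_Mittal_relative_entropy} (not the equivalent $r+k$ form, which would require a preliminary application of Corollary \ref{inversion_under_logarithm}), and I must absorb any zero entries of $W$, $\mathcal{P}$, or $\mathcal{Q}$ through the convention $0\,\ln_{\{k,r\}}(0/0) = 0$ already justified by the Moore-Osgood limit argument in the possibility-of-extension lemma. Once these conventions are in place, the entire proof amounts to a single fiber-wise application of the log-sum inequality followed by a Fubini-type interchange of summation.
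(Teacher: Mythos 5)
Your proposal is correct and follows essentially the same route as the paper: both apply the generalized log-sum inequality (Theorem \ref{log_sum_inequality}) to the sequences $a_i = w_{j,i}p_i^{(in)}$, $b_i = w_{j,i}q_i^{(in)}$ for each output index $j$, cancel the $w_{j,i}$ inside the ratio, then interchange the sums and use $\sum_j w_{j,i} = 1$ to recover $D_{\{k,r\}}(\mathcal{P}||\mathcal{Q})$. Your explicit attention to the $r-k$ exponent matching the first form of Definition \ref{Sharma_Mittal_relative_entropy} and to the zero-entry convention is a small tidiness bonus over the paper's presentation, but the argument is the same.
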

		
		\begin{proof}
			Definition \ref{Sharma_Mittal_relative_entropy} of the generalized divergence indicates that
			\begin{equation}
			\begin{split}
			D_{\{k,r\}}(W\mathcal{P} || W\mathcal{Q}) = & \sum_{j = 1}^m p_j^{(out)} \left(\frac{p_j^{(out)}}{q_j^{(out)}} \right)^{r - k} \ln_{\{k,r\}} \left(\frac{p_j^{(out)}}{q_j^{(out)}} \right) \\
			= & \sum_{j = 1}^m \left[\sum_{i = 1}^n w_{ji} p_i^{(in)}\right] \left( \frac{\sum_{i = 1}^n w_{ji} p_i^{(in)}}{\sum_{i = 1}^n w_{ji} q_i^{(in)}} \right)^{r - k} \ln_{\{k,r\}}\left( \frac{\sum_{i = 1}^n w_{ji} p_i^{(in)}}{\sum_{i = 1}^n w_{ji} q_i^{(in)}} \right). 
			\end{split}
			\end{equation}
			Now, from Lemma \ref{log_sum_inequality} we find that
			\begin{equation}
			\begin{split}
			D_{\{k,r\}}(W\mathcal{P} || W\mathcal{Q}) \leq & \sum_{j = 1}^m \sum_{i = 1}^n \left(w_{ji} p_i^{(in)}\right) \left( \frac{w_{ji} p_i^{(in)}}{w_{ji} q_i^{(in)}} \right)^{r - k} \ln_{\{k,r\}}\left( \frac{w_{ji} p_i^{(in)}}{w_{ji} q_i^{(in)}} \right) \\
			= & \sum_{i = 1}^n \left[ p_i^{(in)} \left( \frac{p_i^{(in)}}{q_i^{(in)}} \right)^{r - k} \ln_{\{k,r\}}\left( \frac{p_i^{(in)}}{q_i^{(in)}} \right) \right] \left[ \sum_{j = 1}^m w_{ji} \right]\\
			= & \sum_{j = 1}^m p_i^{(in)} \left( \frac{p_i^{(in)}}{q_i^{(in)}} \right)^{r - k} \ln_{\{k,r\}}\left( \frac{p_i^{(in)}}{q_i^{(in)}} \right) ~\text{since}~ \sum_{j = 1}^m w_{ji} = 1.
			\end{split}
			\end{equation}
			Hence, we have $D_{\{k,r\}}(W\mathcal{P} || W\mathcal{Q}) \leq D_{\{k,r\}}(\mathcal{P} || \mathcal{Q})$.
		\end{proof}
		
		In Theorem \ref{information_monotonicity}, if the probability transition matrix $W = (w_{ji})_{m \times n}$ has $m < n$, then $W$ partitions the random variable $X = (x_1, x_2, \dots x_n)$ into $m$ groups $G_1, G_2, \dots G_n$ such that $X = \cup_{j = 1}^m G_j$, and $G_k \cap G_l = \emptyset$. Then $p_j^{(out)}(G_j) = \sum_{x_i \in G_j} p_i^{(in)}$. Now Theorem \ref{information_monotonicity} indicates $D(W\mathcal{P}|| W\mathcal{Q}) \leq D(\mathcal{P}|| \mathcal{Q})$, which is formally mentioned as information monotonicity.

	\section{Information geometric aspects}
	
		This section is dedicated to the geometric nature of the generalized divergence. First recall a number of fundamental concepts of information geometry \cite{amari2007methods}. A probability simplex is given by, 
		\begin{equation}
		S = \{\mathcal{P}: \mathcal{P} = (p_1, p_2, \dots p_n), 0 \leq p_i \leq 1, \sum_{i = 1}^n p_i = 1\}.
		\end{equation}
		with the distribution $\mathcal{P}$ described by $n$-independent probabilities $(p_1, p_2, \dots p_n)$. Consider a parametric family of distributions $\mathcal{P}({\bf x})$ with parameter vector ${\bf x} = (x_1, x_2, \dots x_n) \in X$, where $X$ is a parameter space. If the parameter space $X$ is a differentiable manifold and the mapping $x \mapsto \mathcal{P}({\bf p},{\bf x})$ is a diffeomorphism we can identify statistical models in the family as points on the manifold $X$. The Fisher-Rao information matrix $E(ss^T)$, where $s$ is the gradient $[s]_i = \frac{\partial \log \mathcal{P}({\bf p},{\bf x})}{\partial x_i}$ may be used to endow $X$ with the following Riemannian metric
		\begin{equation} 
			G_x(u,v) = \sum_{i,j}u_iv_j \int \mathcal{P}({\bf p},{\bf x}) \frac{\partial}{\partial x_i}\log\mathcal{P}({\bf p},{\bf x}) \frac{\partial}{\partial x_j}\log\mathcal{P}({\bf p},{\bf x})dp = \sum_{i,j}u_iv_j E \left(\frac{\partial\log\mathcal{P}({\bf p},{\bf x})}{\partial x_i} \frac{\partial\log\mathcal{P}({\bf p},{\bf x})}{\partial x_j} \right).
		\end{equation} 
		If $X$ is a discrete random variable then the above integral is replaced with a sum. An equivalent form of $G_x(u, v)$ for normalized distributions is given by
		\begin{equation}
			G_x(u,v) =  -\sum_{i,j}u_iv_j\int \mathcal{P}({\bf p},{\bf x}) \frac{\partial^2}{\partial x_j \partial x_i}\log\mathcal{P}({\bf p},{\bf x})dp = \sum_{i,j}u_iv_j E\left( - \frac{\partial^2}{\partial x_j\partial x_i} \log\mathcal{P}({\bf p},{\bf x}) \right).
		\end{equation} 
		In information geometry, a function $D(\mathcal{P}|| \mathcal{Q})$ for $\mathcal{P}, \mathcal{Q} \in S$ is called divergence if $D(\mathcal{P}|| \mathcal{Q}) \geq 0$ and $D(\mathcal{P}|| \mathcal{Q}) = 0$ if and only if $\mathcal{P} = \mathcal{Q}$. Consider a point $\mathcal{P}$ with coordinates $(p_1, p_2, \dots p_n)$. Let $\mathcal{Q}  = (\mathcal{P} + d(\mathcal{P}))$ be another point infinitesimally close to $\mathcal{P}$. Using the Taylor series expansion we have 
		\begin{equation}
		D(\mathcal{P} + d\mathcal{P} || \mathcal{P}) = \sum g_{ij} dp_i dp_j + O(|dp|^3),
		\end{equation} 
		where $g_{ij}$ is a positive-definite matrix. 	Hence, the Riemannian metric induced by the divergence $D$ is given by
		\begin{equation}
		g_{ij}(\mathcal{P}) = \frac{\partial^2}{\partial p_i \partial p_j} D_{\{k,r\}}(\mathcal{P}|| \mathcal{Q}) |_{\mathcal{Q} = \mathcal{P}}.
		\end{equation}
		Thus, the divergence gives us a means of determining the degree of separation between two
		points on a manifold. It is not a metric since it is not necessarily symmetric. Also, the length of small line segment is given by
		\begin{equation}
		ds^2 = \frac{1}{2} D(\mathcal{P}|| \mathcal{P} + d\mathcal{P}).
		\end{equation}
		
		Recalling Definition \ref{Sharma_Mittal_relative_entropy} of the generalized divergence we calculate
		\begin{equation}
		\begin{split}
		& \frac{\partial}{\partial p_i} D_{\{k,r\}}(\mathcal{P} || \mathcal{Q}) = \frac{\partial}{\partial p_i} \left[ p_i \left(\frac{p_i}{q_i} \right)^{r - k} \ln_{\{k,r\}} \left(\frac{p_i}{q_i} \right) \right]\\
		& \hspace{2.5cm} = \frac{\left((2 r+1) \left(\left(\frac{p_i}{q_i}\right){}^{2 k}-1\right)+2 k\right) \left(\frac{p_i}{q_i}\right){}^{2 r-2 k}}{2 k} \\
		& \frac{\partial^2}{\partial^2 p_i} D_{\{k,r\}}(\mathcal{P} || \mathcal{Q}) = \frac{\left(r (2 r+1) \left(\left(\frac{p_i}{q_i}\right){}^{2 k}-1\right)-2 k^2+4 k r+k\right) \left(\frac{p_i}{q_i}\right){}^{2 r-2 k}}{k p_i} \\
		& \frac{\partial^2}{\partial^2 p_i} D_{\{k,r\}}(\mathcal{P} || \mathcal{Q}) |_{\mathcal{Q} = \mathcal{P}} = \frac{-2 k + 4r + 1}{p_i}, \\
		& \frac{\partial^2}{\partial p_j \partial p_i} D_{\{k,r\}}(\mathcal{P} || \mathcal{Q}) = 0.
		\end{split}
		\end{equation}
		Therefore, the Fisher information matrix $G = (g_{ij})_{n \times n}$ for the generalized divergence is given by
		\begin{equation}
		g_{ij} = \begin{cases} \frac{-2 k + 4r + 1}{p_i}, & \text{for}~ i = j \\ 0 & \text{for}~ i \neq j. \end{cases}
		\end{equation}
		A manifold is called Hassian if there is a function $\Psi(u)$ such that $g_{ij}(\mathcal{P}) = \partial_{ij}(\Psi)$. Here, for $i = j$ we have $\partial_{ii}(\Psi) = g_{ii}(u) = \frac{1 - 2 k + 4 r}{u}$. Integrating twice we find 
		\begin{equation}
		\Psi_{ii}(u) = c_2 + u(c_1 +2k - 4r - 1) + (-2 k+4 r+1) u\log (u),
		\end{equation}
		where $c_1$ and $c_2$ are integrating constants. For $i \neq j$ we have $\partial_{ii}(\Psi) = g_{ij} = 0$, that is $\Psi(u) = c_1u + c_2$. Hence, the statistical manifold induced by the generalized divergence is Hassian.

	\section{Conclusion} 
	
		In recent years, the idea of entropy offers a broad scope of mathematical investigations. In this article, we introduce the two parameter deformed entropy $\ln_{\{k, r\}}$. Interestingly, it can be reduced to the $q$-deformed logarithm for $k = r = \frac{q - 1}{2}$ and natural logarithm when $q \rightarrow 1$. In table \ref{Comparison_between_different_logarithms}, we compare various properties of the logarithm, the $q$-deformed logarithm and $\ln_{\{k, r\}}$. It leads us to propose the new generalized entropy $S_{\{k, r\}}$ with two parameters $k$ and $r$. Interestingly, our proposed entropy has a number of important characteristics which are not established in the earlier proposals of two parameter generalized entropy. The table \ref{Comparison_between_different_entropy} contains the comparative properties of the Shannon entropy, the Tsallis entropy, and $S_{\{k, r\}}$. The table suggests that the new generalized entropy is efficient to be utilized in classical information theory. These properties include chain rule, pseudo-additive property, sub-additive property, and information monotonicity. Properties of the two parameter generalized divergence $D_{\{k, r\}}$, the Tsallis divergence, and the Kullback–Leibler divergence are collected in table \ref{Comparison_between_different_divergence}. Also, we justify that the statistical manifold induced by the generalized divergence is Hassian. 
		
		An interested reader may extend this work further. In the Shannon information theory, the mutual information of two random variables $X$ and $Y$ is defined by $I(X;Y) = D(p(x,y)| p(x)p(y))$, which is the Kullback-Leibler divergence between two probability distributions $p(x,y)$ and $p(x)p(y)$. In case of the generalized entropy, one may introduce the mutual information $I_{\{k,r\}}(X;Y) = D_{\{k,r\}}(p(x,y)||p(x)p(y))$ then investigates its properties. Moreover, the mutual information has a crucial role in the literature of data processing inequalities. Hence, two parameter deformation of data-processing inequalities will be very crucial in this direction.

		\begin{table}[h!]
			\centering
			\caption{Comparison between different logarithms}
			\label{Comparison_between_different_logarithms}
			\begin{tabular}{| p{4cm} | p{2cm} | p{10cm} |}
				\hline
				Properties with descriptions & Logarithm & Expressions\\ \hline \hline 
				\multirow{3}{=}{Definition of logarithm} & logarithm & $\log(x)$.  \\ \cline{2-3} 
				& $q$-deformed logarithm & $\ln_q(u) = \frac{u^{1 - q} - 1}{1 - q}$ for $q \neq 1$ \cite{yamano2002some} \\ \cline{2-3}
				& $\ln_{\{k , r\}}$ & $\ln_{\{k,r\}}(u) = \frac{u^{k} - u^{-k}}{2k u^{r}} = \frac{u^{2k} - 1}{2k u^{r + k}},$
				with $r > 0$ and $0 < k \leq 1$. (Definition \ref{redefined_lnkr})\\ \hline \hline 
				\multirow{3}{=}{Product law: Let $u$ and $v$ be two non-zero real numbers, then } & logarithm & $\log(uv) = \log(u) + \log(v)$ \\ \cline{2-3} 
				& $q$-deformed logarithm & $\ln_q(uv) = \ln_q(u) + \ln_q(v) + (1 - q) \ln_q(u) \ln_q(v)$ \cite{yamano2002some} \\ \cline{2-3}
				& $\ln_{\{k , r\}}$ & $(uv)^{r + k} \ln_{\{k,r\}}(uv) = u^{r + k} \ln_{\{k,r\}}(u) + v^{r + k} \ln_{\{k,r\}}(v) + 2k u^{r + k} v^{r + k} \ln_{\{k,r\}}(u)\ln_{\{k,r\}}(v)$ (Lemma \ref{product_1})\\ \hline \hline 
				\multirow{3}{=}{Log sum inequality: Let $a_1, a_2, \dots a_n$ and $b_1, b_2, \dots b_n$ be non-negative numbers. In addition, $a = \sum_{i = 1}^n a_i$ and $b = \sum_{i = 1}^n b_i$. Then,} & logarithm & $\sum _{i=1}^{n}a_{i}\log {\frac {a_{i}}{b_{i}}} \geq a\log {\frac {a}{b}}$ \\ \cline{2-3} 
				& $q$-deformed logarithm & $\sum_{i = 1}^n a_i \ln_q \left(\frac{a_i}{b_i} \right) \geq a \ln_q \left(\frac{a}{b}\right)$ \cite{furuichi2004fundamental} \\ \cline{2-3}
				& $\ln_{\{k , r\}}$ & $\sum_{i = 1}^n a_i \left(\frac{a_i}{b_i}\right)^{r - k}\ln_{\{k,r\}} \left(\frac{a_i}{b_i}\right) \geq a \left(\frac{a}{b}\right)^{r - k} \ln_{\{k,r\}} \left(\frac{a}{b}\right)$ (Lemma \ref{log_sum_inequality}) \vspace{.5 cm}\\ \hline  
			\end{tabular}
		\end{table}
		
		\begin{table}
			\centering
			\caption{Comparison between different entropy}
			\label{Comparison_between_different_entropy}
			\begin{tabular}{| p{4.5cm} | p{1.5cm} | p{10cm} |}
				\hline
				Properties with descriptions & Entropy & Expressions\\ \hline \hline 
				\multirow{3}{=}{Definition of entropy: Given a random variable $X$ with probability distribution $\mathcal{P} = \{p(x)\}_{x \in X}$} & Shannon entropy & $H(X) = - \sum_{x \in X} p(x) \log(p(x)) = \sum_{x \in X} p(x) \log \left(\frac{1}{p(x)}\right)$ \\ \cline{2-3} 
				& Tsallis entropy & $S_q(X) = - \sum_{x \in X} (p(x))^q \ln_q(p(x))$ \\ \cline{2-3}
				& $S_{\{k , r\}}$ & $S_{\{k,r\}}(X) = - \sum_{x \in X} \left(p(x)\right)^{r + k + 1} \ln_{\{k,r\}}(p(x)) = \sum_{x \in X} \left(p(x)\right)^{k - r + 1} \ln_{\{k,r\}} \left( \frac{1}{p(x)} \right)$ (Definition \ref{modified_definition_of_SM_entropy})\\ \hline \hline 
				\multirow{3}{=}{Positivity } & Shannon entropy & $H(X) \geq 0$ \\ \cline{2-3} 
				& Tsallis entropy & $S_q(X) \geq 0$ \\ \cline{2-3}
				& $S_{\{k , r\}}$ & $S_{\{k,r\}}(X) \geq 0$ \\ \hline \hline 
				\multirow{3}{=}{Chain rule for independent random variables $X$ and $Y$ } & Shannon entropy & $H(X, Y) = H(X) + H(Y)$ \\ \cline{2-3} 
				& Tsallis entropy & $S_q(X, Y) = S_q(X) + S_q(Y) + (1 - q) S_q(X) S_q(Y)$ \cite{furuichi2006information}\\ \cline{2-3}
				& $S_{\{k , r\}}$ & $S_{\{k,r\}}(X, Y) = S_{\{k,r\}}(X) + S_{\{k,r\}}(Y) - 2k S_{\{k,r\}}(X) S_{\{k,r\}}(Y)$ (Equation \ref{pseudo_additive_property_of_SM})\\ \hline  \hline 
				\multirow{3}{=}{Chain rule for dependent random variables $X$ and $Y$} & Shannon entropy & $H(X, Y) = H(X) + H(Y|X)$ \\ \cline{2-3} 
				& Tsallis entropy & $S_q(X, Y) = S_q(X) + S_q(Y|X).$ \cite{furuichi2006information} \\ \cline{2-3}
				& $S_{\{k , r\}}$ & $S_{\{k,r\}}(X, Y) = S_{\{k,r\}}(X) + S_{\{k,r\}}(Y|X).$ (Theorem \ref{chain_rule_for_Sharma_MIttal_entropy})\\ \hline \hline 
				\multirow{3}{=}{Sub-additive property: Given random variables $X_1, X_2, \dots X_n$,} & Shannon entropy & $H(X_1, X_2, \dots X_n) \leq \sum_{i = 1}^n H(X_i)$ \\ \cline{2-3} 
				& Tsallis entropy & $S_q(X_1, X_2, \dots X_n) \leq \sum_{i = 1}^n S_q(X_i)$ \cite{furuichi2006information} \\ \cline{2-3}
				& $S_{\{k , r\}}$ &  $S_{\{k,r\}}(X_1, X_2, \dots X_n) \leq \sum_{i = 1}^n S_{\{k,r\}}(X_i)$ (Theorem \ref{sub_additive_property}) \\ \hline \hline 
				\multirow{3}{=}{Strong sub-additive property: Given any three random variable $X, Y$ and $Z$ we have } & Shannon entropy & $H(X,Y, Z) + H(Z) \leq H(X, Z) + H(Y,Z)$. \\ \cline{2-3} 
				& Tsallis entropy & $S_q(X, Y, Z) + S_q(Z) \leq S_q(X, Z) + S_q(Y, Z)$ \cite{furuichi2006information} \\ \cline{2-3}
				& $S_{\{k , r\}}$ & $S_{\{k,r\}}(X,Y, Z) + S_{\{k,r\}}(Z) \leq S_{\{k,r\}}(X, Z) + S_{\{k,r\}}(Y,Z)$. (Theorem \ref{strong_sub_additive_property})\\ \hline
			\end{tabular}
		\end{table}
		
		\begin{table}
			\centering
			\caption{Comparison between different divergence}
			\label{Comparison_between_different_divergence}
			\begin{tabular}{| p{4cm} | p{2cm} | p{10cm} |}
				\hline
				Properties with descriptions & Divergence & Expressions \\ \hline
				\multirow{3}{=}{Definition of divergence: Given two probability distributions $\mathcal{P} = \{p(x)\}_{x \in X}$ and $\mathcal{Q} = \{q(x)\}_{x \in X}$} &  KL divergence & $D(\mathcal{P}||\mathcal{Q}) = \sum_{x \in X} p(x) \ln\left(\frac{p(x)}{q(x)}\right) = - \sum_{x \in X} p(x) \ln\left(\frac{q(x)}{p(x)}\right)$.\\ \cline{2-3} 
				& Tsallis divergence & $D_q (\mathcal{P}||\mathcal{Q}) = - \sum_{x \in X} p(x) \ln_q \left(\frac{q(x)}{p(x)} \right)$ \cite{furuichi2004fundamental} \\ \cline{2-3}
				& $D_{\{k , r\}}$ & $D_{\{k,r\}}(\mathcal{P} || \mathcal{Q}) = \sum_{x \in X} p(x) \left(\frac{p(x)}{q(x)} \right)^{r - k} \ln_{\{k,r\}} \left(\frac{p(x)}{q(x)} \right) = - \sum_{x \in X} p(x) \left(\frac{q(x)}{p(x)} \right)^{r + k} \ln_{\{k,r\}} \left(\frac{q(x)}{p(x)} \right)$ (Definition \ref{Sharma_Mittal_relative_entropy})\\ \hline \hline 
				\multirow{3}{=}{Non-negativity} & KL divergence & $D(\mathcal{P} || \mathcal{Q}) \geq 0$ \\ \cline{2-3} 
				& Tsallis divergence & $D_q (\mathcal{P}||\mathcal{Q}) \geq 0$ \\ \cline{2-3}
				& $D_{\{k , r\}}$ & $D_{\{k , r\}}(\mathcal{P} || \mathcal{Q}) \geq 0$ \\ \hline \hline 
				\multirow{3}{=}{Pseudo-additivity: Given probability distributions $\mathcal{P}^{(1)} = \{p^{(1)}(x)\}_{x \in X}$, $\mathcal{Q}^{(1)} = \{q^{(1)}(x)\}_{x \in X}$, $\mathcal{P}^{(2)} = \{p^{(2)}(y)\}_{y \in Y}$ and $\mathcal{Q}^{(2)} = \{q^{(2)}(y)\}_{y \in Y}$ we have } & KL divergence & $D(\mathcal{P}^{(1)} \otimes \mathcal{P}^{(2)} || \mathcal{Q}^{(1)} \otimes \mathcal{Q}^{(2)}) = D(\mathcal{P}^{(1)} || \mathcal{Q}^{(1)}) + D(\mathcal{P}^{(2)} || \mathcal{Q}^{(2)})$ \\ \cline{2-3} 
				& Tsallis divergence & $D_q(\mathcal{P}^{(1)} \otimes \mathcal{P}^{(2)} || \mathcal{Q}^{(1)} \otimes \mathcal{Q}^{(2)}) = D_q(\mathcal{P}^{(1)} || \mathcal{Q}^{(1)}) + D_q(\mathcal{P}^{(2)} || \mathcal{Q}^{(2)}) - (q - 1) D_q(\mathcal{P}^{(1)} || \mathcal{Q}^{(1)}) D_q(\mathcal{P}^{(2)} || \mathcal{Q}^{(2)})$ \cite{furuichi2004fundamental} \\ \cline{2-3}
				& $D_{\{k , r\}}$ & $D_{\{k,r\}}(\mathcal{P}^{(1)} \otimes \mathcal{P}^{(2)} || \mathcal{Q}^{(1)} \otimes \mathcal{Q}^{(2)}) = D_{\{k,r\}}(\mathcal{P}^{(1)} || \mathcal{Q}^{(1)}) + D_{\{k,r\}}(\mathcal{P}^{(2)} || \mathcal{Q}^{(2)}) - 2k D_{\{k,r\}}(\mathcal{P}^{(1)} || \mathcal{Q}^{(1)}) D_{\{k,r\}}(\mathcal{P}^{(2)} || \mathcal{Q}^{(2)})$ (Theorem \ref{pseudo_additivity_of_divergence})\\ \hline  \hline
				\multirow{3}{=}{Joint-convexity:  Let $\mathcal{P}^{(k)} = \{p^{(k)}(x)\}_{x \in X}$ and $\mathcal{Q}^{(k)} = \{q^{(k)}(x)\}_{x \in X}$ for $k = 1, 2$ are probability distributions. Construct new probability distributions $(1 - \lambda)\mathcal{P}^{(1)} + \lambda \mathcal{P}^{(2)} = \{(1 - \lambda)p^{(1)}(x) + \lambda p^{(2)}(x)\}_{x \in X}$, and $(1 - \lambda)\mathcal{Q}^{(1)} + \lambda \mathcal{Q}^{(2)} = \{(1 - \lambda)q^{(1)}(x) + \lambda q^{(2)}(x)\}_{x \in X}$ as convex combinations.} & KL divergence & $D((1 - \lambda)\mathcal{P}^{(1)} + \lambda \mathcal{P}^{(2)} || (1 - \lambda) \mathcal{Q}^{(1)} + \lambda \mathcal{Q}^{(2)}) \leq (1 - \lambda) D(\mathcal{P}^{(1)} || \mathcal{Q}^{(1)}) + \lambda D(\mathcal{P}^{(2)} || \mathcal{Q}^{(2)})$ \\ \cline{2-3} 
				& Tsallis divergence & $D_q((1 - \lambda)\mathcal{P}^{(1)} + \lambda \mathcal{P}^{(2)} || (1 - \lambda) \mathcal{Q}^{(1)} + \lambda \mathcal{Q}^{(2)}) \leq (1 - \lambda) D_q(\mathcal{P}^{(1)} || \mathcal{Q}^{(1)}) + \lambda D_q(\mathcal{P}^{(2)} || \mathcal{Q}^{(2)})$ \cite{furuichi2004fundamental} \\ \cline{2-3}
				& $D_{\{k , r\}}$ & $D_{\{k, r\}}((1 - \lambda)\mathcal{P}^{(1)} + \lambda \mathcal{P}^{(2)} || (1 - \lambda) \mathcal{Q}^{(1)} + \lambda \mathcal{Q}^{(2)}) \leq (1 - \lambda) D_{\{k, r\}}(\mathcal{P}^{(1)} || \mathcal{Q}^{(1)}) + \lambda D_{\{k, r\}}(\mathcal{P}^{(2)} || \mathcal{Q}^{(2)})$ (Theorem \ref{joint_convexity_of_divergence} ) \vspace{3cm} \\ \hline
			\end{tabular}
		\end{table}

	\section*{Acknowledgments}

		S.D. was a Post Doctoral Research Associate-1 at the S. N. Bose National Centre for Basic Sciences during this work. He is also thankful to Antonio Maria Scarfone and Bibhas Adhikari for some suggestions and carefully revising the manuscript. S.F. was partially supported by JSPS KAKENHI Grant Number 16K05257.


\end{document}